\documentclass[journal]{IEEEtran}
\usepackage[english]{babel}
\usepackage{amsmath,amssymb,amsfonts}
\usepackage{amsthm}
\usepackage{graphicx}
\usepackage{xcolor}
\usepackage{subcaption}
\usepackage{booktabs}
\usepackage{algorithmic}
\usepackage{algorithm}
\usepackage{hyperref}
\usepackage{comment}

\newtheorem{theorem}{Proposition}

\def\BibTeX{{\rm B\kern-.05em{\sc i\kern-.025em b}\kern-.08em
    T\kern-.1667em\lower.7ex\hbox{E}\kern-.125emX}}
\begin{document}

\title{Orchestrating Key Scarcity: Global Capacity Allocation in Hybrid QKD-PQC Networks}

\author{%
\IEEEauthorblockN{
Pablo P. Rejo$^{1}$,
Manuel Fernández-Veiga$^{1}$,
Ana Fernández-Vilas$^{1}$,\\
Diego Cajaraville-Aboy$^1$,
Javier Vecino-Peñas$^1$,
Rebeca P. Díaz-Redondo$^{1}$
}

\IEEEauthorblockA{
$^{1}$atlanTTic Research Center, Universidade de Vigo, Spain\\
}
\thanks{This work was funded by grant 0590\_QUANTUM\_IBER\_IA\_6\_E. Strategic impulse of capabilities in quantum technologies and artificial intelligence across the Iberian space, co-funded by EU under the programme
Interreg VI-A Spain-Portugal (POCTEP) 2021-2027}%
}
\maketitle

\begin{abstract}
This article presents a Distributed Key Management System (DKMS) that provisions quantum-safe symmetric encryption keys to applications across a network. To overcome the distance and rate limitations inherent to Quantum Key Distribution (QKD), the architecture adopts a hybrid design: it utilizes multi-hop trusted relays for physical key transport, while applying an end-to-end Post-Quantum Cryptography (PQC) layer to ensure the relayed key material remains strictly protected from intermediate nodes. Because QKD transforms key generation into a scarce physical resource, managing unpredictable application demands poses a critical capacity allocation problem. The core contribution of this work is an SDN-driven control plane that globally orchestrates the network. Rather than relying on vulnerable hop-by-hop capacity reservations, a central controller simultaneously computes optimal filling rates and multipath routing for all per-peer key buffers. This global orchestration ensures that no application is starved and that all key buffers synchronize their filling processes. Evaluated on a Kubernetes-based digital twin across five distinct network topologies, the controller successfully filled all key buffers to capacity in under six minutes. Furthermore, under deliberately excessive demand, the network demonstrated graceful degradation, safely rejecting excess requests through controlled rate-limiting rather than experiencing catastrophic failure.
\end{abstract}

\begin{IEEEkeywords}
Key Management System, Quantum Key Distribution, Multi-commodity Flow, resource allocation, buffer management.
\end{IEEEkeywords}

\section{Introduction}
\label{sec:intro}

A Key Management System (KMS) is a service whose purpose is to supply cryptographic keys to the applications that need them, exposing an API through which a client can request key material instead of generating and safeguarding it on its own. Its value lies in concentrating behind that single interface two operations that are easy to get wrong in isolation: the generation of the keys, which must draw on a sound entropy source and follow the appropriate cryptographic procedures; and their delivery, which must ensure that each key reaches only its legitimate recipient. A KMS thus lets applications consume keys as a service while assuming responsibility for both halves of one guarantee ---that the keys handed out are well generated and that they are securely distributed to whoever is entitled to them. It is precisely the strength of this guarantee that quantum computing calls into question.

In the emerging landscape of Quantum Key Distribution Networks (QKDN), KMSs  serves as the  control plane that bridges the physical quantum layer ---characterized by variable, environmentally sensitive, and distance-limited key generation rates--- with the dynamic security demands of the application layer. A KMS serves as the trust pillar for secure communications, governing the complete operational lifecycle of cryptographic keys—generation, storage, distribution, rotation, and revocation. In classical environments, a KMS presents a unified interface (such as KMIP or PKCS\#11) for Secure Application Entities (SAEs), agnostic from underlying cryptographic mechanisms. Traditional KMS architectures are based on two assumptions that make this abstraction simple and scalable: (1)  key generation is computationally instantaneous and inexhaustible, i.e. a classical KMS can produce key material on demand whenever a client requests it; and (2) key distribution relies on asymmetric cryptography (such as RSA or Diffie-Hellman), enabling two endpoints to establish shared secrets over untrusted, multi-hop networks without raw key material ever touching intermediate nodes.

However, the quantum scenario undermines both assumptions, as Shor’s algorithm eliminates the security of asymmetric key exchange. In response, two lines of defense have emerged: the information-theoretical security of Quantum Key Distribution (QKD), e.g.~\cite{bb84} and post-quantum cryptography (PQC)~\cite{mceliece1978}. While QKD restores security, it shifts key generation to a rate-limited physical process, turning symmetric keys into a scarce resource and creating a mismatch with unpredictable application demands. Consequently, adapting to this scenario requires Key Management Systems (KMS) to evolve from localized, on-demand generation into a distributed framework driven by per-peer buffer management, network-wide capacity orchestration, and hybrid QKD–PQC transport to extend quantum-safe security to end devices.

A primary challenge in deploying a KMS layer over QKDN \cite{10.1145/3730575} is cryptographic material allocation, which must resolve the mismatch between the continuous physical key production and the heterogeneous consumption rates of user applications. To prevent buffer starvation during peak demands or entropy waste through overflow during idle periods, the KMS must implement sophisticated pool management strategies. This involves on-demand key reservation mechanisms, strict cryptographic domain isolation to ensure multi-tenant security, and priority-aware allocation frameworks that can provision absolute security via One-Time Pad (OTP) for critical infrastructure while applying symmetric key expansion for lower-priority traffic. Beyond local allocation, the distance limitations of point-to-point QKD links necessitate a multi-hop trusted relay architecture, introducing complex routing and scheduling optimization problems. Unlike traditional IP routing that optimizes for bandwidth or transmission latency, key-aware routing requires the KMS to compute paths based on the real-time availability of accumulated entropy across adjacent quantum links. This demands advanced load-balancing algorithms to prevent key depletion bottlenecks and orchestrate the step-by-step cryptographic relay process efficiently. Furthermore, to mitigate the risk of intermediate node compromise, the KMS must be capable of scheduling multi-path key aggregation schemes, splitting cryptographic material across disjoint routes, which in turn requires highly precise temporal synchronization at the destination endpoint. Finally, this spatial orchestration is underpinned by rigorous key lifecycle management. The KMS must continuously enforce dynamic expiration policies to securely destroy unused cryptographic material, determine optimal rekeying frequencies to guarantee Forward Secrecy, and employ stochastic scheduling models to adaptively respond to the physical fluctuations inherent to the underlying quantum channels.

The system proposed in this document ---hereafter the DKMS (\textit{Distributed Key Management System})--- is a distributed quantum key manager with hybrid links, able to combine PQC and QKD and to offer different levels of security and quality of service.  Its main novelty with respect to earlier proposals lies precisely in this control plane. To materialize this architecture, the specific contributions of our work are twofold:

\begin{itemize}

\item \textbf{Smart Allocation and SDN-inspired Architecture}. The applications that request key material (SAEs) do not draw it down at a smooth, predictable rate; they ask for keys sporadically, in amounts and at instants that are essentially unpredictable from the network's standpoint. Were each request served on demand, QKD's low, distance-dependent generation rate would translate directly into latency and rejected requests. While buffering eases the rate problem locally, it does not solve the global allocation problem across shared links. Rather than relying on hop-by-hop reservation ---where local admission cannot guarantee end-to-end delivery--- this architecture introduces a centralized SDN control plane. The SDN solves a single global instance of \emph{Maximum Concurrent Multi-Commodity Flow} with demand scaling \cite{shahrokhi1990,ahuja1993} over the full matrix of $(s,d)$ pairs. It assigns each flow a rate that simultaneously offsets its consumption and fills its remaining buffer space, driving the synchronized filling of every buffer in the network through a single scalar $\lambda$, while pre-installing the forwarding tables at the intermediate nodes.

\item \textbf{Hybrid Key Distribution}. Both QKD and PQC provide security, but QKD carries constraints that condition its real deployment. The reachable distance over optical fiber is short and the secret-key rate falls quickly with it, so the material a link can actually deliver is scarce and unevenly spread across the network. More fundamentally, without quantum repeaters, multi-hop QKD networks are forced onto trusted relays, reintroducing exactly the trust assumption that QKD set out to remove. A natural way out is to protect the transported key end-to-end with a post-quantum channel, so that no intermediate node ever observes it. In the architecture proposed here, this role falls to the ORR module through its \texttt{max\_hops} parameter: its default setting (\texttt{max\_hops}~$=1$) establishes an end-to-end PQC channel between source and destination, and leaves the physical relay across intermediate nodes entirely to the transport layer.
\end{itemize}

We have validated the proposed architecture under realistic operational conditions on a computing cluster. The experimental environment integrates physical and virtual QKD links into five diverse multi-node network topologies. Under controlled tests, the SDN control plane successfully filled all cryptographic buffers to capacity. Furthermore, under deliberately excessive application demand, the network demonstrated graceful degradation, safely rejecting excess requests through controlled rate-limiting while preserving 
key match integrity, thus proving its resilience and practical viability.

The remainder of this paper is structured as follows. Section \ref{sec:background} reviews current QKD networks, their practical limitations, and positions our contribution within the state of the art. Section \ref{sec:architecture} details the internal architecture of the DKMS and the end-to-end lifecycle of a key. Section \ref{sec:buffers} describes the buffer dynamics and local traffic management, while Section \ref{sec:sdn} formalizes the SDN control plane and the MCMCF-$\lambda$ allocation problem. Section \ref{sec:results} analyzes the experimental results under both buffer-filling and stress-load scenarios. Finally, Section \ref{sec:conclusions} concludes the paper and outlines future research directions.

\section{Current QKD Networks and Their Limitations}
\label{sec:background}

Quantum Key Distribution (QKD) has matured from single-link laboratory experiments into fielded metropolitan and backbone networks. This section reviews representative deployments, articulates the practical limitations that constrain them---above all the rate--distance barrier and the reliance on trusted relays---and surveys the standardization, control, routing, and hybrid-cryptography efforts on which our contribution builds. We then position the proposed DKMS against the specific gap that motivates it: the absence of a centralized, globally fair allocation of scarce key-buffer capacity across all source--destination pairs of a multi-hop network.

\subsection{Deployed QKD Networks}
The first operational QKD network was the DARPA Quantum Network, deployed by BBN Technologies, Harvard, and Boston University across Boston and Cambridge, Massachusetts; it grew to ten nodes and ran continuously from 2004 to 2007, integrating QKD-derived keys into IPsec~\cite{elliott2005darpa}. In Europe, the SECOQC network demonstrated in Vienna in 2008 interconnected several nodes over eight links using heterogeneous QKD technologies and introduced a layered trusted-node architecture~\cite{poppe2008secoqc}. The Tokyo QKD Network, inaugurated on 14 October 2010, integrated six QKD systems into a mesh and demonstrated the world-first secure television conferencing over a 45\,km link (the Otemachi--Koganei span at $14.5$\,dB loss), while placing a dedicated Key Management System (KMS) above the quantum layer~\cite{sasaki2011tokyo}. The SwissQuantum network operated in the Geneva metropolitan area for 21 months, from the end of March 2009 to the beginning of January 2011, to validate long-term robustness in the field~\cite{stucki2011swissquantum}. The Cambridge/UK Quantum Network demonstrated a high-rate metropolitan ring coexisting with 100\,Gb/s data traffic on installed fiber, together with a robust key-delivery layer~\cite{dynes2019}.

The most extensive deployment is in China, where the Beijing--Shanghai backbone, completed in 2017, spans $2\,032$\,km through 32 trusted-relay nodes; it was integrated with the Micius satellite's two satellite-to-ground free-space links and more than 700 fiber links to form an integrated space-to-ground network reaching $4\,600$\,km and serving more than 150 industrial users, including state and local banks, municipal power grids, and e-government platforms~\cite{chen2021}. In Europe, the Madrid Quantum Network (MadQCI) has deployed 28 QKD modules from five manufacturers across nine production nodes of a multi-tenant telecommunications network under a software-defined control plane~\cite{martin2024madqci}, building on the software-defined QKD engineering pioneered by the same groups~\cite{aguado2019sdn}. These field efforts are coordinated at continental scale by the OpenQKD testbed program and the EuroQCI initiative---a pan-European quantum communication infrastructure whose 2019 Declaration has been signed by all 27 EU Member States~\cite{openqkd,euroqci}. Comprehensive surveys document this evolution~\cite{cao2022qinternet,mehic2020networking}.

\subsection{The Rate--Distance Barrier and Trusted Relays}
The defining limitation of fiber QKD is that a quantum signal cannot be amplified, so the secret-key rate decays with channel loss. In the low-loss regime the achievable rate is well captured by a simplified attenuation model,
\begin{equation}
R(L) = R_0 \, 10^{-\alpha L / 10},
\label{eq:rate}
\end{equation}
where $\alpha$ is the fiber attenuation (typically $0.2$\,dB/km) and $R_0$ the back-to-back rate. This exponential decay is not merely an engineering artifact: Pirandola \emph{et al.} established the repeaterless secret-key capacity bound (the PLOB bound), a fundamental upper limit of $-\log_2(1-\eta)$ secret bits per channel use for a lossy channel of transmittance $\eta$, which no point-to-point protocol can exceed without quantum repeaters~\cite{pirandola2017plob}. Consequently, key rates are low and strongly heterogeneous across a network: high-rate systems reach a sustained $13.72$\,Mb/s over a $2$\,dB channel~\cite{yuan2018mbps}, whereas the twin-field protocol, which improves the scaling from $\eta$ to $\sqrt{\eta}$, has extended distance to a record $1\,002$\,km but at a vanishing rate of only $0.0034$\,bit/s~\cite{liu2023tfqkd1000}. Rate and reach are therefore in fundamental tension.

Because practical quantum repeaters remain confined to the laboratory---no group has yet demonstrated a repeater outperforming direct transmission under realistic conditions, and they are still regarded as a subject of fundamental research~\cite{quantumrepeater}---operational networks overcome the distance limit exclusively through \emph{trusted relays}. Keys are relayed hop-by-hop, typically by one-time-pad (XOR) combination of per-link keys at each intermediate node~\cite{tysowski2018}. This has a critical security consequence: every intermediate node sees the relayed key in the clear and must be fully trusted, and each relay consumes scarce link key material, which can be depleted and cause key-exchange failure. Managing the mismatch between key generation and consumption---through per-link buffering, storage, and routing of key material---is thus a central challenge of QKD networking.

\subsection{Standardization and SDN-Based Control}
A standardization landscape has emerged to make these networks interoperable and manageable. ETSI GS QKD 014 defines a REST-based key-delivery API over HTTPS by which Secure Application Entities (SAEs) retrieve keys from Key Management Entities~\cite{etsi014} (hereafter ETSI~014), while ETSI GS QKD 015 specifies the control interface between software-defined QKD nodes and an SDN controller, and ETSI GS QKD 018 the orchestration interface~\cite{etsi015,etsi018}; the draft ETSI GS QKD 020 adds a horizontal, REST-based Interoperable KMS API by which key managers pass keys to one another~\cite{etsi020}. In parallel, the ITU-T Y.3800 series provides the reference framework, functional architecture, and key-management model for QKD networks (QKDNs)~\cite{itu_y3800}. Software-defined networking has become the dominant paradigm for integrating QKD into carrier infrastructure: the Madrid work demonstrated an SDN control plane that abstracts heterogeneous QKD devices and exposes a network-wide view for key management~\cite{aguado2019sdn,martin2024madqci}. These architectures centralize \emph{control and management} but, as discussed below, they do not centralize the \emph{optimization of key-generation and buffer capacity} across the whole network.

\subsection{Key Routing, Resource Allocation, and Hybrid QKD--PQC}
Given scarce and depletable key material, routing and resource allocation are active research areas. Secret-key-aware routing selects relay paths using link cost functions based on residual key, generation rate, and consumption rate~\cite{yang2018secretkey}; resource-aware and collaborative routing recompute paths from current key availability to raise the key-relay success rate~\cite{yu2022skpcr}; and further schemes address routing over networks combining trusted nodes and repeaters~\cite{amer2020routing}. Information-centric and joint routing/channel/key-rate formulations optimize the acceptance ratio of individual key-rate requests~\cite{zhang2023icqkd}. These approaches are predominantly \emph{per-demand} or \emph{per-request}: they route or admit one commodity at a time, rather than jointly and fairly allocating capacity across all pairs. The classical operations-research tools for the latter---multi-commodity flow and the maximum concurrent flow problem, which maximizes a common throughput fraction subject to capacity constraints---are well established~\cite{shahrokhi1990,ahuja1993} and efficiently solvable with modern LP solvers~\cite{huangfu2018highs}, but their systematic application to QKD key buffers is only nascent. The concurrent-flow idea has been proposed for a QKD key-management architecture~\cite{tysowski2018} and for max-min allocation in a satellite (LEO/GEO) trusted-repeater setting~\cite{grillo2021}, yet a general, fairness-centric global allocation of key-buffer capacity across all source--destination pairs of a terrestrial multi-hop network remains largely unexplored.

Finally, because QKD alone is range-limited and depends on authenticated classical channels, hybrid QKD+PQC designs that combine information-theoretic and computational security are increasingly favored: concatenating a QKD key with a post-quantum (e.g., ML-KEM) shared secret yields a composite key that remains secure as long as either component holds~\cite{zeng2024hybrid,braun2023hybrid}.

\subsection{Positioning of This Work}
The literature reveals a consistent pattern: deployed QKD networks rely on hop-by-hop trusted relay with local, per-link key management; SDN control planes centralize configuration and monitoring but not capacity optimization; and routing/allocation research optimizes individual demands rather than the network as a whole. The scarce, distance-dependent, and heterogeneous key-generation capacity of a multi-hop QKD network is therefore rarely managed by a single global optimizer that fairly shares it among all communicating pairs. Our DKMS addresses precisely this gap. It couples per-peer key buffers and standardized ETSI~014 application interfaces with an SDN control plane that solves a Maximum Concurrent Multi-Commodity Flow with demand scaling (MCMCF-$\lambda$) linear program, computing a globally fair allocation of key-buffer capacity across every source--destination pair. Hop-by-hop OTP key relay through trusted nodes is retained for reach, while onion routing with ML-KEM provides end-to-end protection of relayed keys, and a hybrid QKD+PQC scheme hedges the residual trust and range assumptions. In contrast to prior per-demand routing and to control planes that manage without globally optimizing, our approach treats network-wide key provisioning as a single concurrent-flow optimization, turning the physical scarcity captured by~\eqref{eq:rate} and the PLOB bound into an explicit, solvable allocation problem.

\section{DKMS Architecture}
\label{sec:architecture}

The system proposed in this work ---the DKMS--- is not a single component but a distributed key manager assembled from four cooperating modules: \textbf{KMS}, \textbf{ORR}, \textbf{QKC} and \textbf{SDN}. The first three make up the \emph{data plane}, which generates, protects and transports the keys; the SDN forms the \emph{control plane}, which decides at what rate keys are requested and along which routes they travel. This internal division is invisible from the outside: users or applications, termed SAEs (\emph{Secure Application Entities}), simply connect to the DKMS through a standard ETSI~014 interface and obtain symmetric keys secure against quantum adversaries. Each physical node is a hybrid node, able to derive key material both from a QKD link and from PQC. A remark on naming: \emph{DKMS} designates both the system as a whole and, where the context makes it unambiguous, a single node of it (the KMS--ORR--QKC triple). The module inside a node that owns the key buffers and the ETSI~014 interface is its KMS; when later on we say that ``the DKMS'' reports state or polls a rate, it is this module acting on the node's behalf.

\begin{figure*}[t]
    \centering
    \includegraphics[width=0.80\textwidth]{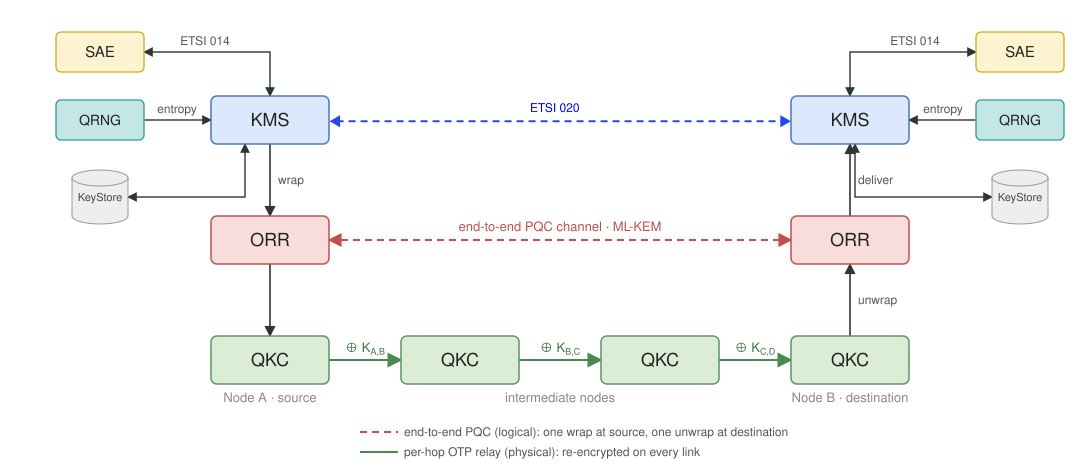}
    \caption{Architecture of a DKMS node and the end-to-end journey of a \emph{transport} key into the per-peer buffers. The ORR wraps it end to end for the destination ORR (dashed logical channel, via ML-KEM), while the QKCs relay it hop by hop with a per-link OTP (physical path), so intermediate nodes never hold it. Session keys handed to SAEs do not traverse this relay: they travel over the direct KMS-to-KMS ETSI~020 channel, protected by one-time pad with a buffered transport key.}
    \label{fig:e2e_flow}
\end{figure*}

Rather than describing each module in isolation, it is easier to follow the life of a key across the system; Figure~\ref{fig:e2e_flow} traces that journey. The key observation is that the system handles two kinds of keys, produced and consumed on decoupled timescales: \emph{transport} keys, relayed across the network ahead of demand and accumulated in per-peer buffers, and \emph{session} keys, generated on demand and handed to the SAEs. Continuously and ahead of any request, each \textbf{KMS} generates transport-key material ---drawing fresh randomness from a quantum random number generator (emulated QRNG)--- and pushes it toward every peer. To do so it gives the material to the \textbf{ORR}, which ---in the default mode--- wraps it in an end-to-end post-quantum channel addressed to the destination ORR, so that no intermediate node can read it. The wrapped material then enters the \textbf{QKC}, the module that actually moves it across the network: each QKC along the path performs a single One-Time Pad (OTP) operation per link, decrypting the incoming key payload with the material it shares with the previous neighbor, choosing the next hop, and re-encrypting with the material of the next link. At the destination node its QKC delivers the packet to the local ORR, which peels the post-quantum layer and recovers the material, and the destination KMS deposits it in the buffer it keeps for that peer. This filling process runs on its own, so that at any moment each pair of KMSs already holds a reserve of shared transport keys on both ends.

Only then, and independently, does consumption happen: when a local SAE asks its KMS over the ETSI~014 interface for a key shared with a remote peer, the KMS generates the \emph{session} key on the spot, protects it by one-time pad with a fresh transport key drawn from the corresponding buffer ---consuming exactly one buffered key per delivery--- and hands it to the destination KMS over a direct, mutually authenticated (mTLS) HTTP/2 connection, the ETSI~020 interface described below. This is precisely what buffering buys: what the reserve removes from the critical path is the slow, multi-hop QKD relay, so the request is served at the cost of a single direct network round-trip between the two KMSs while inheriting the protection under which that transport material was relayed ---per-link OTP with QKD link keys against outside observers, and the ORR's end-to-end ML-KEM wrap against the relays themselves. Overseeing the whole process, but never touching the keys themselves, the \textbf{SDN} control plane decides beforehand at what rate each KMS replenishes each transport-key buffer and which route the QKCs follow.

With the path in mind, each module can now be described in turn. The KMS and the SDN are only outlined here, since their internals are the subject of Sections~\ref{sec:buffers} and~\ref{sec:sdn} respectively; the ORR and the QKC, which have no dedicated section, are described in full.

\paragraph{\textbf{KMS}} \emph{The key creation and distribution layer.} It produces key material from a quantum random number generator (QRNG), persists it in a local KeyStore, and delivers keys to the local SAEs through an ETSI~014 interface~\cite{etsi014}. This ETSI~014 \texttt{Get key} endpoint is the only interface the SAEs ever see; the movement of key material \emph{between} KMSs instead follows a separate, horizontal interface, the REST-based Interoperable KMS API of the draft ETSI GS QKD~020 specification~\cite{etsi020}, whose purpose is precisely to pass keys from one key manager to another. When a local SAE requests a key shared with a peer SAE hosted on another node, the source KMS generates the \emph{session} key from its QRNG, protects it by one-time pad with a transport key taken from the per-peer buffer, and delivers it to the destination KMS over this ETSI~020 interface ---a direct KMS-to-KMS HTTP/2 exchange under mutual TLS; the ORR/QKC relay of Figure~\ref{fig:e2e_flow} is not on this path, its role being to keep the transport-key buffers replenished in the background--- so that the destination KMS can file the key under its \texttt{key\_ID} and its SAE recover it with an ETSI~014 \texttt{Get key with key IDs} request. Because the ETSI~014 \texttt{Get key} request carries an optional \texttt{additional\_slave\_SAE\_IDs} list, this same mechanism extends naturally to multicast key sharing: a single SAE can obtain one key shared simultaneously with several SAEs that reside on distinct KMSs, and the DKMS fulfills it by replicating the ETSI~020 delivery ---each copy wrapped with a transport key of the corresponding peer--- to every KMS involved. To decouple these deliveries from the slow, bursty generation of QKD, each KMS keeps a continuously fed buffer of transport keys for every peer; how those buffers are dimensioned, filled and policed is the subject of Section~\ref{sec:buffers}.

\paragraph{\textbf{ORR} (Onion Routing Router)} \emph{The security layer.} It abstracts the trust policy over the intermediate nodes: whenever a KMS pushes transport-key material to another node, it always hands it first to the ORR, whose job is to protect that material end to end with post-quantum cryptography so that no intermediate relay can read ---or tamper with--- it. This behavior is governed by an optional \texttt{max\_hops} parameter. With \texttt{max\_hops}~$= 0$ no cryptographic wrap is applied and the key is passed straight to the local QKC, so security rests entirely on trusting the intermediate nodes and on the hop-by-hop OTP \emph{key relay} of the QKC. With \texttt{max\_hops}~$= 1$ ---the default used to populate the buffers--- the ORR establishes an end-to-end PQC channel between the source and destination nodes, so that no intermediate node ever observes the transported key. With \texttt{max\_hops}~$= -1$ a full hop-by-hop \emph{onion} wrap is applied along the whole path, adding one protection layer per intermediate node; any integer greater than~$1$ acts as a lighter variant that caps the number of layers. The onion layers are not a fresh post-quantum encapsulation per message: each pair of ORRs establishes a long-lived shared secret once, by means of a post-quantum key-encapsulation mechanism (ML-KEM), and then derives a lightweight per-key symmetric layer from it. The post-quantum primitive thus protects the establishment of the inter-ORR secret, while per-key wrapping stays cheap. Having applied (or not) the cryptographic wrapping, the ORR hands the packet to the QKC for transport across the network.

\paragraph{\textbf{QKC} (Quantum Key Channel)} \emph{The transport layer.} It interconnects the network's links ---whether backed by a real QKD channel or by a PQC-based substitute that plays the same role--- and behaves as a router, or switch, of keys: given a key destined for a particular node, it decides to which physical neighbor to forward it. To do so it consults a \textbf{local forwarding table indexed by destination} that the SDN has pre-installed for it (see Section~\ref{sec:sdn}). When the table holds a single next hop the choice is deterministic; when the destination is served by several paths, the QKC picks the next hop for each key by hashing its identifier against the SDN's WCMP weights, so the traffic split matches the weights statistically while giving each key a deterministic path. Before sending, the QKC encrypts the key by \emph{One-Time Pad} (OTP) with the material of the chosen link: writing $K_{a,b}$ for the material shared between neighboring QKCs $a$ and $b$, the key payload leaving over that link is $m = S \oplus K_{a,b}$. The receiving QKC recovers $S = m \oplus K_{a,b}$ and, if it is not the final node, re-encrypts with the material of the next link.

\paragraph{\textbf{SDN}} \emph{The orchestration layer.} With a global view of the topology, the buffer levels and the observed demand, it decides at what rate each KMS must request keys from every peer and pre-computes the routes that the QKCs apply. It is the only module with a network-wide picture. 

Two properties of this design are worth making explicit. First, the ORR and the QKC operate at different granularities: the ORR protection is applied end to end ---one wrap at the source, one unwrap at the destination--- whereas the OTP relay is per hop, re-applied on every link the key crosses. This is exactly what lets the intermediate nodes move the transport-key material without ever seeing it: they relay the PQC-wrapped payload, and none of them holds the secret needed to open it. Second, in this default end-to-end mode neither the KMS nor the ORR needs to know the topology, and no module queries the control plane while a key is in flight: each QKC forwards solely from the table the SDN pushed beforehand, so the entire data-plane journey completes without a single round-trip to the SDN. Only in the full or truncated \emph{onion} modes does the ORR additionally receive the ordered sequence of relay nodes; the per-link forwarding always remains the QKC's responsibility.

\section{Buffers and Traffic Management}
\label{sec:buffers}

One of the main problems of current QKD networks is that the key generation rate is low. One way to partially mitigate this limitation is to generate and store key material continuously ---even when there is no immediate consumption--- in per-peer buffers, so that requests are protected with material already on hand rather than waiting on the instantaneous link rate. That said, in a network of $n$ DKMSs a mechanism is additionally needed to \emph{globally} distribute capacity among all pairs, so that no buffer fills up or empties out in an unbalanced way.

Each DKMS therefore has an explicit buffer $B_{s,d}$ with each of the other DKMSs in the network, with a reference capacity $B_{\max}$ that the control plane treats as the fill target. These buffers hold the \emph{transport} keys: each key delivered to an SAE consumes exactly one of them to wrap the session key, so SAE demand drains the buffers one for one. Management is organized at two levels. At the \textbf{local level}, inside each DKMS, a \emph{token bucket} per SAE polices the request stream and shares the per-peer rate fairly among the SAEs that are simultaneously active, preventing any one of them from starving the others. At the \textbf{global level}, the SDN controls \emph{at what rate} each DKMS requests keys from each peer and \emph{by which path} they are transported, by solving an MCMCF-$\lambda$ LP over the entire matrix of pairs with a single synchronization scalar.

\subsection{Buffer dynamics and associated commodity}
\label{subsec:buffer_dynamics}

Each buffer $B_{s,d}$ is modeled as a first-order linear system whose level $L_{s,d}(t)$ evolves according to
\[
    \frac{dL_{s,d}}{dt} = r_{s,d}(t) - \delta_{s,d}(t),
\]
where $r_{s,d}(t)$ is the key injection rate that the network delivers to destination $d$ from source $s$ (controlled by the SDN, Section~\ref{sec:sdn}) and $\delta_{s,d}(t)$ is the aggregate consumption rate by the local SAEs of DKMS $s$ that are already authorized to talk with peers served by $d$. The identification of each ordered pair $(s,d)$ with a \emph{commodity} $k$ of the MCMCF-$\lambda$ problem is direct: $s_k = s$, $t_k = d$, $B_k = B_{\max}$, $L_k = L_{s,d}(t)$, and $\delta_k = \hat{\delta}_{s,d}(t)$ with the latest EWMA estimate.

The \textbf{remaining space} of the buffer is additionally defined as
\[
    R_{s,d}(t) \;=\; B_{\max} - L_{s,d}(t) \;\geq\; 0.
\]
This quantity is the only one the model needs to articulate priority: the emptier the buffer (the larger $R_{s,d}$), the greater the filling component that the SDN assigns to the corresponding flow. No discrete classes or explicit thresholds are required: priority emerges as a continuous linear dependence of the LP, described in the next Section.

\paragraph{Demand estimation $\hat{\delta}_{s,d}$}
The DKMS internally measures the number of keys drained by SAEs in each one-second window and maintains an exponential moving average with factor $\alpha \approx 0.2$:
\[
    \hat{\delta}_{s,d}(n+1) \;=\; (1-\alpha)\,\hat{\delta}_{s,d}(n) \;+\; \alpha \cdot \text{drain}(n),
\]
where $\text{drain}(n)$ counts the key material requested by the SAEs of commodity $(s,d)$ in window $n$, accumulated \emph{before} the admission \emph{token bucket}. Counting attempted rather than served demand makes $\hat{\delta}_{s,d}$ reflect the load that the SAEs try to exert ---including requests that are momentarily rate-limited--- and not only the load the network currently serves. This $\hat{\delta}_{s,d}$ is what the DKMS reports to the SDN.

\subsection{Local per-SAE control}

To prevent an aggressive SAE from monopolizing its DKMS's buffer, each SAE associated with $B_{s,d}$ is policed by an independent \emph{token bucket} parameterized from the rate $r_{s,d}$ dictated by the SDN. The refill rate is the per-SAE share $X_k = r_{s,d}(t) / N_{\text{SAEs}}$, where $N_{\text{SAEs}}$ is the number of SAEs that have recently been active against that peer, so that admitting a new SAE proportionally reduces the share of the others. The bucket size is bounded by both a burst allowance and the \emph{current buffer level} $L_{s,d}(t)$: when the offered load exceeds the locally available key material, the request is rejected as back-pressure (HTTP~$429$) rather than served from an empty buffer, and the cost charged to the bucket grows with the amount of key material requested. There is no separate scheduler: fairness among the SAEs of the same DKMS emerges from these per-SAE buckets, recomputed against the live count of active SAEs on every admission, without interfering with the global allocation among DKMSs discussed in the following section.

\section{SDN Control Plane Based on MCMCF-$\lambda$}
\label{sec:sdn}

The proposed architecture moves \emph{all} of the decision-making to the control plane: the SDN maintains the graph and a complete matrix of commodities $(s,d)$, solves the capacity allocation globally, \textbf{pre-installs the WCMP forwarding tables} in the intermediate QKCs, and \textbf{dictates to each DKMS the rate} at which it must request keys. Upon changes in the levels reported by the DKMSs or in the estimated capacity of the links, the SDN recomputes and the DKMSs read the new allocation on their next periodic \emph{poll}. Figure~\ref{fig:sdn_control_loop} summarizes this closed control loop: buffer state flows up to the global LP, and the resulting routes and rates are pushed back down to the data plane.

\begin{figure*}[t]
    \centering
    \includegraphics[width=0.85\textwidth]{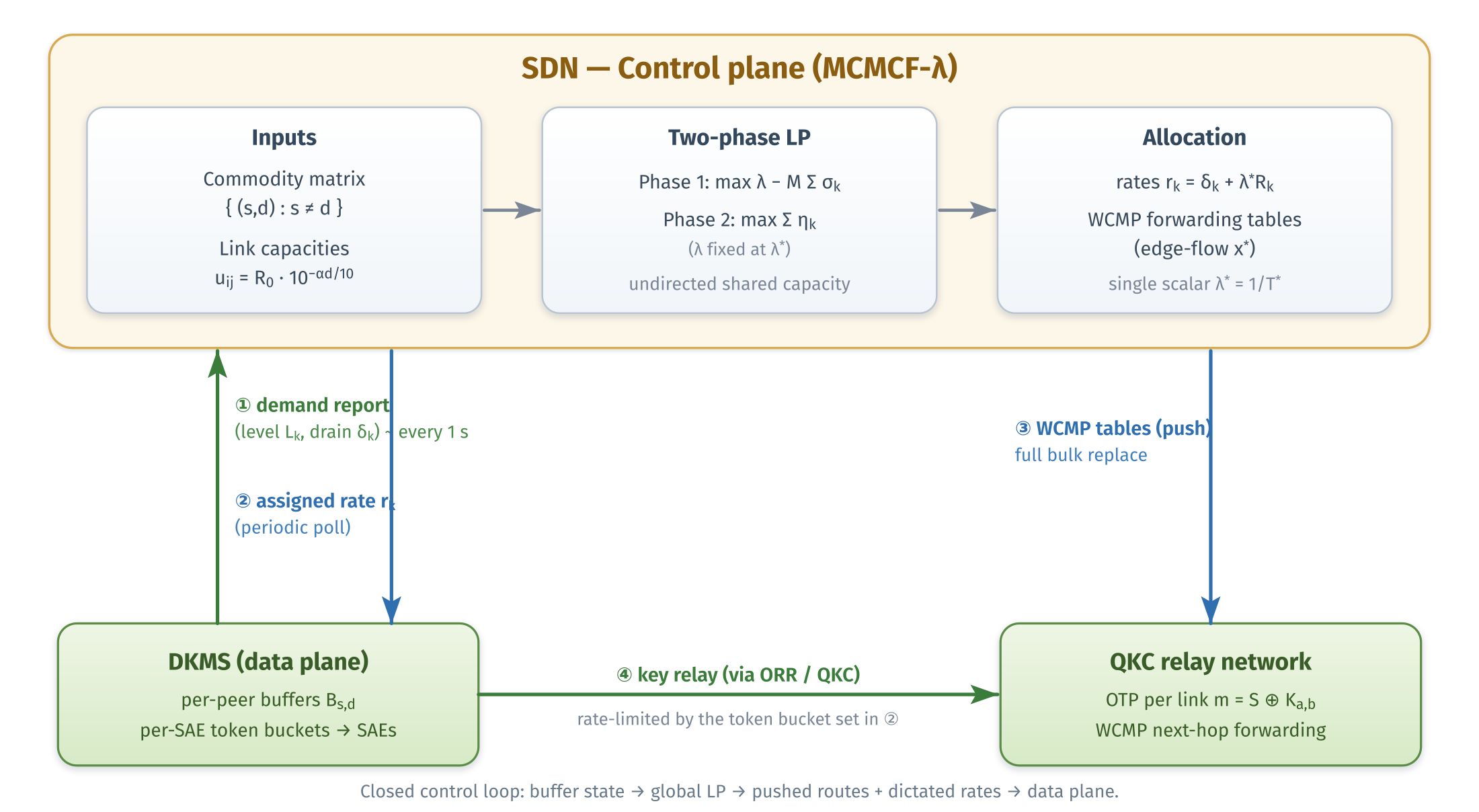}
    \caption{The MCMCF-$\lambda$ control loop. Each DKMS reports its per-commodity buffer state (level $L_k$, drain $\delta_k$); the SDN assembles the commodity matrix and the attenuated link capacities $u_{ij}$, solves the two-phase LP, and pushes the WCMP forwarding tables to the QKCs while publishing the per-commodity rates $r_k = \delta_k + \lambda^\star R_k$ (base form; refined in Section~\ref{subsec:lex}) that each DKMS polls. The single scalar $\lambda^\star$ characterizes the whole network's saturation state.}
    \label{fig:sdn_control_loop}
\end{figure*}

\subsection{Formulation as MCMCF with demand scaling}
\label{subsec:formulacion}

The problem is modeled as an instance of \emph{Maximum Concurrent Multi-Commodity Flow} \cite{shahrokhi1990,ahuja1993} over an undirected graph in \emph{edge-flow} formulation, with a single scalar $\lambda$ that controls the filling speed of all the network's buffers in a synchronized manner.

\paragraph{Graph} $G = (V, E)$ \emph{undirected}, where $V$ is the set of physical QKCs and $E$ the set of links, with capacity $u_{ij}$ (in keys/s) per link $\{i,j\} \in E$. The graph is undirected because each QKD link is a single physical quantum channel that generates \emph{shared} material in a single event: both ends receive the same bytes and the capacity $u_{ij}$ is distributed among the flows crossing $\{i,j\}$ in either direction. To facilitate the formulation, each link is split into two directed arcs $A = \{(i,j),(j,i) : \{i,j\} \in E\}$, coupled by the shared-capacity constraint.

\paragraph{Commodities} $K = \{\,k = (s_k,t_k) : s_k,t_k \in V_{\text{DKMS}}, s_k \neq t_k\,\}$, indexed by ordered pair of DKMSs. Each commodity $k$ is associated with the buffer $B_{s_k, t_k}$ and is characterized by four scalars: a buffer capacity $B_k > 0$ (constant within each solve), the current level $L_k \in [0, B_k]$ reported by DKMS $s_k$, the EWMA drain rate $\delta_k \geq 0$ also reported by DKMS $s_k$, and the remaining buffer space $R_k = B_k - L_k$. The inverse pair $(t_k, s_k)$ is a distinct commodity because the buffers $B_{s,d}$ and $B_{d,s}$ are independent, even though both share the physical capacity $u_{ij}$ of the links they traverse.

\paragraph{Variables} For each commodity $k$ and each directed arc $(i,j) \in A$, a variable $x^{k}_{ij} \geq 0$ representing the amount of flow of $k$ over the arc. Additionally, a scalar $\lambda \geq 0$ that acts as the inverse of the synchronized filling time.

\paragraph{Rate--scale relation} If all buffers are to fill exactly at time $T = 1/\lambda$, the total rate delivered to commodity $k$ must be
\begin{equation}
    r_k \;=\; \delta_k \;+\; \lambda\, R_k.
    \label{eq:r_k}
\end{equation}
The interpretation is direct: $\delta_k$ compensates the consumer's drain and $\lambda R_k$ provides the filling flow, proportional to the remaining space. Buffers with more space receive more filling rate; already-full buffers ($R_k = 0$) receive only $\delta_k$.

\paragraph{Constraints} Flow conservation per commodity and node, with demand $\pm r_k$ at source/destination:
\begin{equation}
    \sum_{j:\,(i,j)\in A} x^{k}_{ij} \;-\; \sum_{j:\,(j,i)\in A} x^{k}_{ji} \;=\; b^{k}_i, \;\; \forall k\in K,\,i\in V,
    \label{eq:flow}
\end{equation}
with $b^{k}_i = +r_k$ if $i = s_k$, $b^{k}_i = -r_k$ if $i = t_k$, and $b^{k}_i = 0$ otherwise, and \textbf{undirected shared capacity} per link:
\begin{equation}
    \sum_{k\in K} \bigl(x^{k}_{ij} + x^{k}_{ji}\bigr) \;\leq\; u_{ij}, \quad \forall\, \{i,j\}\in E.
    \label{eq:cap}
\end{equation}
Note that \eqref{eq:cap} couples \emph{both directions} of the same physical link over all commodities. This constraint is what distinguishes the model from the classical directed scheme and is indispensable for faithfully representing the shared nature of the QKD channel.

\paragraph{Objective and synchronization property} Maximize $\lambda$:
\begin{equation}
    \boxed{\;\;\max\;\lambda \quad \text{s.t.\ } \eqref{eq:flow},\;\eqref{eq:cap},\; x^{k}_{ij},\lambda \geq 0\;\;}
    \label{eq:lp}
\end{equation}
Substituting \eqref{eq:r_k} into \eqref{eq:flow}, the commodity sees $\lambda R_k$ as a demand term at its source, and the LP is solved in polynomial time. The current prototype uses \texttt{microlp}, an in-memory revised-simplex solver; for larger instances a sparse interior-point backend such as HiGHS \cite{huangfu2018highs} can be substituted without changing the model.

\begin{theorem}[Filling synchronization]
\label{prop:sync}
Let $(\lambda^\star, \{x^{k\star}_{ij}\})$ be an optimal solution of \eqref{eq:lp} and $r_k^\star = \delta_k + \lambda^\star R_k$ the induced rates. Then all buffers reach their capacity $B_k$ simultaneously at the instant $T^\star = 1/\lambda^\star$.
\end{theorem}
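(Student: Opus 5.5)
The proof is a direct integration of the buffer dynamics of Section~\ref{subsec:buffer_dynamics} under the rates induced by the optimal solution. We fix the solve instant as $t=0$, so that $L_k(0)=L_k$ and $R_k=B_k-L_k$, and we adopt the modeling assumptions implicit in the statement. First, the rates $r_k^\star$ are held constant between solves. Second, the actual consumption $\delta_{s,d}(t)$ equals the estimate $\delta_k$ used in the LP over the interval considered. Third, $\lambda^\star>0$. Under these assumptions each buffer obeys $dL_k/dt = r_k^\star-\delta_k = \lambda^\star R_k$, a constant.

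The first step is to check that the induced rates are actually deliverable, so that $r_k(t)=r_k^\star$ is legitimate. By feasibility of $(\lambda^\star,\{x^{k\star}_{ij}\})$ in \eqref{eq:lp}, the flow conservation constraints \eqref{eq:flow} with $b^k_{s_k}=r_k^\star$ state that the net outflow of commodity $k$ at $s_k$ equals its net inflow at $t_k$, namely $r_k^\star$. The shared-capacity constraint \eqref{eq:cap} guarantees that these flows can be supported simultaneously on every link. Hence the network delivers rate $r_k^\star$ to every commodity at once, which is the point of solving the joint LP rather than per-commodity problems.

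The second step integrates the dynamics: $L_k(t)=L_k+\lambda^\star R_k\,t$. Setting $L_k(t)=B_k$ gives $\lambda^\star R_k\,t = R_k$. Two cases follow. If $R_k>0$, the unique solution is $t=1/\lambda^\star=T^\star$, and this value does not depend on $k$. If $R_k=0$, the buffer is already full and $r_k^\star=\delta_k$ keeps it full for all $t$, in particular at $T^\star$. For $t<T^\star$ we have $L_k(t)<B_k$ whenever $R_k>0$, so no buffer overflows early. Therefore every buffer attains its capacity at $T^\star$, and those that start non-full attain it exactly then.

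The main obstacle is not computational but the status of the hypotheses. The result holds only if the drain stays at its estimate and the allocation is not recomputed during $[0,T^\star]$. The degenerate case $\lambda^\star=0$, which arises when capacity merely covers $\sum_k\delta_k$, also has to be excluded explicitly. We therefore state these assumptions explicitly before the integration and note that optimality of $\lambda^\star$ is not needed for synchronization itself. Feasibility with any $\lambda>0$ already yields a common filling time $1/\lambda$; optimality only makes this common time the smallest achievable.
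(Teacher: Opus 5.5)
Your proposal is correct and follows the paper's own argument: integrate $dL_k/dt = r_k^\star - \delta_k = \lambda^\star R_k$ to get $L_k(t) = L_k + \lambda^\star R_k t$, then evaluate at $T^\star = 1/\lambda^\star$ to obtain $L_k(T^\star) = B_k$ for every $k$. The paper's one-line proof leaves several things implicit that you state explicitly: rates held constant between solves, realized drain equal to the estimate $\delta_k$, $\lambda^\star > 0$, and the $R_k = 0$ case.
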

\begin{proof}
The buffer level evolves as $L_k(t) = L_k + (r_k^\star - \delta_k)\,t = L_k + \lambda^\star R_k \cdot t$; evaluating at $t = T^\star$ yields $L_k(T^\star) = L_k + R_k = B_k$ for all $k$. The equality of times $T_k = T^\star$ is obtained algebraically without the need to impose classes or explicit priorities.
\end{proof}

\subsection{Lexicographic refinement of residual bandwidth}
\label{subsec:lex}

The solution of \eqref{eq:lp} saturates the set of links that constitute the system's bottleneck, but leaves free capacity in the others. That slack can be exploited to over-fill buffers with alternative paths through non-critical arcs, reducing their individual filling time without affecting the joint time $T^\star$. A variable $\eta_k \geq 0$ is introduced per commodity with upper bound $R_k / T_{\text{replan}}$ that prevents filling more than fits in a re-computation horizon. The effective rate becomes $r_k = \delta_k + \lambda R_k + \eta_k$.

Instead of the classical method $\max \lambda + \varepsilon \sum_k \eta_k$, which fails numerically when $\lambda$ is small relative to the range of $\eta$, the solver solves two chained LPs with exact lexicographic ordering. \textbf{Phase~1} maximizes $\lambda - M\sum_k \sigma_k$, where $\sigma_k \in [0, \delta_k]$ are slack variables that absorb the unserviceable drain demand when it exceeds the minimum cut, and $M = 10^{3}$ is a strong penalty that forces the solver to nullify the deficit before maximizing $\lambda$. \textbf{Phase~2} maximizes $\sum_k \eta_k$ with $\lambda$ \emph{substituted} literally by its phase-1 optimum $\lambda^{\star}$, rather than imposed as a lower-bound constraint $\lambda \geq \lambda^{\star} - \epsilon_{\text{num}}$. The substitution removes $\lambda$ from the variable set of phase~2 entirely and is intended to suppress a numerical pathology observed with simplex-based solvers, which intermittently reported phase~2 as infeasible when $\lambda^{\star}$ was small relative to the dynamic range of $\eta_k$. The constraint structure of phase~2 inherits from phase~1 by replacing every occurrence of $\lambda$ with the scalar $\lambda^{\star}$. The final rate that the DKMS receives is $r_k = (\delta_k - \sigma_k)_+ + \lambda^\star R_k + \eta_k^\star$.

\paragraph{Graceful degradation under overload} The variables $\sigma_k$ resolve the pathological case in which the aggregate demand $\sum_k \delta_k$ exceeds the capacity of the minimum cut between sources and destinations. Without slack, the LP is infeasible and the SDN is forced to return $r_k = 0$ for all commodities (\emph{cascade fail}). With slack, the LP is always feasible: the network serves what it can, records the deficit as $\sigma_k$, and the DKMSs adjust their \emph{token buckets} to the deliverable pace; the SAEs see \emph{back-pressure} (HTTP 429) instead of a control-plane failure.

\paragraph{Special case $R_k = 0$ for all $k$} If all buffers are full, $\lambda$ has a null coefficient in all constraints and $\max \lambda$ is unbounded. The solver cuts off this case by applying $\lambda = 0,\, r_k = \delta_k$ (pure maintenance), avoiding the pathology without perturbing the rest of the control flow.

\paragraph{Behavior at the evaluated scale} The literal substitution removes the documented source of drift, but with the simplex backend used in the prototype, phase~2 is still reported infeasible in roughly $70\,\%$ of the solves at the commodity counts of the evaluated topologies ($|K| = 380$). The failure matters operationally, because phase~2 produces both outputs consumed downstream: the $\eta_k$ bonuses and the edge-flow that the SDN projects onto the WCMP tables ---the phase-1 flows are discarded by the solver, which retains only $\lambda^\star$ and the slacks $\sigma_k$. On a successful solve the network therefore runs with multipath WCMP forwarding and $\eta$-boosted rates; on a failed one it falls back to the phase-1 rate $(\delta_k - \sigma_k)_+ + \lambda^\star R_k$ and to plain shortest-path forwarding until the next successful solve. The evaluation campaign of Section~\ref{sec:results} ran under this intermittent regime ---roughly one solve in three succeeding--- and the bonus is far from cosmetic: in earlier trials with phase~2 disabled, buffer saturation was roughly an order of magnitude slower. A more robust interior-point backend is expected to make phase~2 reliable at scale.

\subsection{Forwarding tables in the QKC (WCMP)}
\label{subsec:forwarding}

The \emph{edge-flow} solution of the LP is projected onto the forwarding tables of each QKC. For each transit QKC and each destination, the flows of all commodities that share that destination are aggregated per outgoing neighbor, yielding a set of \emph{Weighted-Cost Multi-Path} (WCMP) next hops with their relative weights. The result is, for each QKC, a compact table indexed by destination that maps every destination to a weighted list of next-hop neighbors. An entry has a single next hop when the LP concentrates all the flow toward that destination on one path, and several when the solution is genuinely \emph{multipath} over the graph; in the latter case each key is assigned its next hop by a hash of its identifier over the weights, so the split matches the WCMP weights statistically. Installation is performed by \emph{push} from the SDN after each recomputation: the SDN replaces the full table of every affected QKC in a single bulk update. When phase~2 of the solver fails, no edge-flow is available and the SDN instead installs plain shortest-path tables until the next successful solve. If a QKC restarts and loses its table, the next recomputation reinstalls it. A finer mechanism, in which each entry carries a TTL and the QKC performs \emph{pull-on-miss} against the SDN, is left for future work; the current prototype relies solely on the push.

\subsection{Rate polling by the DKMS}

The rate at which each \emph{ingress} DKMS requests keys from each peer is obtained by periodically \emph{polling} the SDN. The response enumerates all peers for which the DKMS is the source and, for each, splits the assigned capacity into its two operational directions (the encryption side and the decryption side), reflecting the independence of the two buffers $B_{s,d}$ and $B_{d,s}$ that a pair of DKMSs maintains. It also carries a topology-version tag that lets the DKMS detect when the SDN has republished a new topology snapshot and invalidate any local state derived from the previous one. Locally the DKMS enforces the encryption-side value as the token-bucket ceiling for its own emission toward each peer; the decryption-side value is the one the peer enforces symmetrically for the reverse buffer. The choice of \emph{polling} over \emph{callback} is deliberate: the period (of the order of one second, configurable) is short enough to react to capacity redistributions, and it simplifies the operation of the control plane by avoiding return credentials, retries, and potentially unreachable \emph{back-channels}.

\subsection{Demand reporting by the DKMS}
\label{subsec:demand}

Each DKMS periodically publishes (about once per second) the state of all the buffers for which it is the source: for every commodity it reports the current level $L_k$, the reference capacity $B_k$, the EWMA drain rate $\delta_k$ and a timestamp. The SDN keeps the latest report per pair $(s,d)$ and uses it as the input $L_k$, $B_k$, $\delta_k$ of the LP in the next recomputation.

Unlike a scheme based on the \emph{push} of discrete events, the report is always complete: each message overwrites the entire buffer state of the emitting DKMS. This makes the control plane \emph{convergent by construction} in the face of any transient failure: it suffices for the next report to reach the SDN for the state to reconcile, without the need for explicit \emph{retry} mechanisms or \emph{idempotency keys}.

\subsection{Control loop and response to change}
\label{subsec:controlloop}

The SDN controller runs a periodic loop that, in each iteration, reads the latest demand reports and the capacity matrix $u_{ij}$ of the current topology snapshot, builds the LP inputs (commodities, arcs and capacities), solves the two LP phases described in Section~\ref{subsec:lex} on a dedicated \emph{worker} thread so as not to block the request handlers, projects the resulting edge-flow onto the per-QKC WCMP representation, publishes the new rate snapshot and pushes the new WCMP tables in parallel to the affected QKCs. When a commodity enters overload, its slack $\sigma_k$ absorbs the deficit and its delivered rate drops proportionally, leaving the free capacity available for the rest via $\lambda$. When a buffer fills ($R_k \to 0$), its contribution to $\lambda$ disappears and the rest of the network sees the released slack. There is no \emph{tear-down} or re-establishment: only recompute and publish.

\subsection{Bootstrap of the commodity matrix}
\label{subsec:bootstrap}

Since the SDN starts with no demand information, if it were to wait for the first report to solve the LP, the DKMSs that polled for their rate before then would see null rates. To avoid this blockage, the solver synthesizes a commodity for \emph{every} ordered pair of distinct DKMSs, with defaults $(L_k = 0,\, B_k = 4\,096,\, \delta_k = 0)$ ---$4\,096$ keys being the DKMS factory default, overwritten by the first real report---; pairs without connectivity simply receive $r_k = 0$ from the LP. As the real reports arrive, those defaults are overwritten entry by entry. Thus every commodity that \emph{awakens} already has capacity assigned and begins to flow without renegotiating.

\subsection{Recomputation triggers}

The control loop of Section~\ref{subsec:controlloop} runs periodically with a configurable period (of the order of one second), and the resulting ticks are coalesced through a \emph{debounce} window whose floor the controller auto-tunes from the size of the topology. Since the empirical cost of one solve grows roughly as $\mathcal{O}(|K|^2 \cdot |E|)$, a fixed window short enough for small topologies would stack recomputes on large ones faster than they can finish; the controller therefore raises the floor of the debounce as the network grows so that consecutive solves do not overlap. Demand reports update the registry continuously and are absorbed by the next scheduled solve. Topology mutations (addition or removal of a DKMS, QKC, or link) are likewise absorbed at the next periodic tick, which rebuilds the arc list and capacity matrix from the current topology snapshot before solving.

\subsection{Scaling}

The controller solves two LPs in sequence. Phase 1 has $2|K|\cdot|E| + |K| + 1$ variables (the $x^{k}_{ij}$ per commodity and arc, one slack $\sigma_k$ per commodity, and the scalar $\lambda$); phase 2 has $2|K|\cdot|E| + |K|$ variables (the same flow re-optimization plus one bonus $\eta_k$ per commodity, with $\lambda$ substituted by its phase-1 optimum and therefore not present as a variable). Both phases share the same constraint structure of $|K|\cdot|V|$ flow-conservation equalities plus $|E|$ undirected-capacity inequalities, for a total of $|K|\cdot|V| + |E|$ constraints. Since $|K| = |V_{\text{DKMS}}|\cdot(|V_{\text{DKMS}}|-1)$, the size of the LP grows quadratically with the number of DKMSs. Table~\ref{tab:escalado} reports, for a family of random topologies of average degree $4$ (so that $|E|\approx 2|V_{\text{DKMS}}|$) generated with a fixed seed, the analytic LP size together with the RAM regime observed on EKS with \texttt{microlp} (an in-memory revised-simplex solver) under a memory limit of $8$\,GB assigned to the SDN pod; the empirical outcome is the out-of-memory threshold, which the solver first crosses at $|V_{\text{DKMS}}| = 31$.

\begin{table}[t]
\centering
\caption{Analytic LP size and observed RAM regime of the \texttt{microlp} solver on random topologies of average degree $4$ ($|E|\approx 2|V_{\text{DKMS}}|$, fixed seed) with $u_{ij} = 2\,000$ keys/s and an $8$\,GB SDN pod. Variable counts follow $2|K|\cdot|E| + |K| + 1$; the RAM column is measured up to $|V_{\text{DKMS}}| = 30$, the OOM threshold is empirical at $|V_{\text{DKMS}}| = 31$, and the $N = 40$ row is an analytic size projection (not runnable under the $8$\,GB limit, where any $|V_{\text{DKMS}}| \geq 31$ is OOM-killed).}
\label{tab:escalado}
\begin{tabular}{@{}cccccc@{}}
\toprule
$|V_{\text{DKMS}}|$ & Commodities & $|E|$ & LP variables & SDN RAM & State \\
\midrule
10  & 90     & 20 & $\sim 3\,700$    & $<1$ GB   & stable \\
20  & 380    & 40 & $\sim 31\,000$   & $<2$ GB   & stable \\
25  & 600    & 50 & $\sim 61\,000$   & $\sim 4$ GB & stable \\
30  & 870    & 60 & $\sim 105\,000$  & $\sim 8$ GB & marginal \\
31  & 930    & 62 & $\sim 116\,000$  & $>8$ GB   & OOM \\
40  & 1\,560 & 80 & $\sim 251\,000$  & ---       & projected \\
\bottomrule
\end{tabular}
\end{table}

The bottleneck is the LP's constraint matrix, not the solving operation itself: \texttt{microlp} internally reconstructs the basis and the factorization in each solve, which scales as $\mathcal{O}(|K|^2 |E|)$. For larger topologies, standard \emph{column generation} techniques over pre-computed sets of $k$-shortest paths would allow reducing variables to $\mathcal{O}(|K|\cdot k)$, at the cost of losing the \emph{splittable} multipath guarantee over arbitrary arcs. Another alternative is to replace the solver with an external one based on interior-point methods and \emph{sparse} format (HiGHS \cite{huangfu2018highs}, CLP), which in preliminary tests scales well up to $|V_{\text{DKMS}}| = 100$.

The forwarding table of a single QKC is indexed by destination, so it holds $\mathcal{O}(|V_{\text{DKMS}}| \cdot \Delta)$ entries, where $\Delta$ is the average degree of the graph (at most $\Delta$ next hops per destination after WCMP quantization); summed over all QKCs, the aggregate forwarding state of the network is $\mathcal{O}(|V_{\text{DKMS}}|^2 \cdot \Delta)$. For $|V_{\text{DKMS}}| = 100$ and $\Delta = 3$ this amounts to on the order of $30\,000$ entries across the network (a few MB in memory). It is manageable without persistence; if a QKC restarts, its table is lost and the SDN reinstalls it on the next recomputation.

\subsection{Operational correctness of the SDN}
\label{subsec:correctitud}

For the control plane to be usable in production, it is not enough that it solves an LP: it must maintain observable \emph{invariants} and degrade reasonably in the face of failures. The proposed architecture guarantees the following properties; they refer to the allocation computed and installed after a successful solve, while during the shortest-path fallback episodes the delivered rates remain those of phase~1 and the per-link load of the fallback routes is best-effort rather than LP-certified.

\paragraph{End-to-end admission} Since the LP imposes the capacity constraint \eqref{eq:cap} \emph{over the entire} graph, if the solver returns $r_k > 0$ for a commodity, by construction there exists a feasible \emph{edge-flow} distribution that delivers that flow from $s_k$ to $t_k$ respecting \emph{all} capacities. The scenario --- typical of the hop-by-hop model --- in which a commodity is admitted on the first link but is blocked at the third is not possible.

\paragraph{Sum of rates $\leq$ link capacity} Since the constraints \eqref{eq:cap} are imposed explicitly for each $\{i,j\} \in E$ summing both directions of the arc and all commodities, the sum of assigned rates crossing a link never exceeds its estimated capacity $u_{ij}$. This is what allows each QKC to limit the link's traffic with its \emph{token bucket} without the need to renegotiate per commodity. The LP's unit of account coincides with the physical one in the evaluated configuration: a $256$-bit transported key consumes exactly one $256$-bit link key per hop, because the QKC one-time-pads only the $32$-byte onion body while the routing headers travel as cleartext frame fields. Keys larger than the $32$-byte link-key granularity, or the nested onion modes ---which serialize inner layers into the payload--- would instead consume $\lceil \text{payload} / 32~\text{B} \rceil$ link keys per hop.

\paragraph{Consistency under recomputation} Each recomputation moves the network from one feasible allocation to the next. A key already in flight is processed with whatever forwarding table its current QKC holds, and the next key of the same flow sees the updated table once the SDN has pushed it; the rate snapshot and the forwarding tables converge to the new solution within one control cycle. Because both the previous and the new allocation individually satisfy the per-link capacity constraint~\eqref{eq:cap}, the transition does not require a \emph{tear-down}: it is a recompute-and-republish.

\paragraph{Fallback upon SDN outage} If the SDN becomes unavailable, the QKCs continue forwarding according to their last installed WCMP table and the DKMSs continue applying the last \texttt{rate} they received as a token-bucket ceiling. The data plane therefore keeps operating against the most recent consistent allocation, although it cannot adapt to capacity or demand changes until the SDN recovers. A finer-grained fallback based on TTL per forwarding entry, \emph{pull-on-miss} with exponential backoff at the QKC, and an explicit refresh policy at the DKMS is discussed in Section~\ref{sec:conclusions} as future work.

\paragraph{Link failure} When a link $\{i,j\}$ goes down, its capacity $u_{ij}$ collapses to zero in the capacity matrix and a new LP is solved. Commodities with alternative connectivity keep $r_k > 0$; those that become isolated are signaled to the \emph{ingress} DKMS with $r_k = 0$ and are responsible for blocking or falling back to the PQC plane. The capacity matrix is updated through the SDN's link-capacity interface; automatic detection and reporting of a downed link by the adjacent QKC (via OTP \emph{relay} failure or capacity collapse) is part of the intended operation and is not yet fully wired in the prototype.

\paragraph{Telemetry and verification} The SDN collects telemetry from the DKMSs through the demand-reporting channel itself, which is periodic and idempotent: each report overwrites the complete buffer state of the emitting DKMS. This allows an operator to continuously verify that (i)~the rate $r_k$ served to each DKMS is consistent with the solution of the last LP, (ii)~the observed effective rates respect the \emph{token buckets}, and (iii)~the actual occupancy of each link is within the solver's tolerances. The metric $\lambda$ is by itself a good summary of the network's health: a high $\lambda$ indicates buffers that fill quickly and an uncongested network; a low $\lambda$ indicates that the network is close to saturation.

\subsection{Emulation of QKD links}
\label{subsec:quditto}

To emulated QKD links, the system integrates \emph{Quditto}  \cite{app14031018}, an open-access automated emulation platform that integrates high-fidelity quantum channel modeling with a standardized key-delivery API identical to the API in a physical QKD node.  Quditto emulates pairs of QKD nodes connected by a quantum channel and continuously produces \emph{shared} key material between the two ends, which is what the QKC then uses to encrypt the \emph{relay} messages via OTP.

\paragraph{Link model} Each link $\{i,j\}$ in the emulated QKD topology corresponds to a Quditto instance configured with the attenuation parameters and reference rate of the simplified model~\eqref{eq:rate}. Internally, Quditto delivers to both ends the \emph{same} logical key buffer ($k_{\text{me}} = k_{\text{partner}}$), consistent with the LP's undirected graph hypothesis: there is a single scalar $u_{ij}$ per link, not two. This is important for the correctness of the formulation, because if the simulator produced different material per direction, the shared-capacity constraint \eqref{eq:cap} would no longer hold.

\paragraph{Buffers per peer, not per link} Although the physical link is unique, the DKMS maintains a buffer \emph{per peer} (specifically, indexed by the neighboring DKMS's identifier and the \emph{origin/destination} role), and not per physical link. This reflects that the consumption of each end is independent: the local SAE $\text{A}$ that pulls keys from DKMS $s$ to talk with $d$ empties the buffer of $s$ for $d$, but not the symmetric buffer of $d$ for $s$. This is why the commodities $(s,d)$ and $(d,s)$ are distinct in the LP, even though they share the physical capacity $u_{ij}$ of the links they traverse.

\paragraph{Link capacity in the LP} The capacity $u_{ij}$ of each link is derived from the attenuation model~\eqref{eq:rate}, $u_{ij} = R_0\,10^{-\alpha d_{ij}/10}$, and is loaded into the SDN together with the topology, so that long, lossy links carry a proportionally smaller share of the budget. The SDN additionally exposes a link-capacity interface that updates a single edge's capacity and triggers a recomputation; this is the hook through which a future feedback loop could feed an observed rate $\hat{u}_{ij}(t)$, reflecting attenuation changes or a pause in the QKD engine, without perturbing the solver manually. In the present prototype the capacities are fixed at load time from the per-link attenuation parameters.

\paragraph{Why this nuances performance} If Quditto were replaced by an \emph{ideal} key source ($u_{ij}$ constant and infinite), the resulting scalar $\lambda^\star$ from the LP would be arbitrarily large and uninformative: the network would always be uncongested. With Quditto reproducing $R_0 \, 10^{-\alpha L /10}$, longer links saturate sooner and the solver redistributes load among the available paths via edge-flow. In the experiments all links share the same deployed length of $5$~km, so capacities are uniform and the differences observed between topologies are structural rather than attenuation-driven; heterogeneous-length deployments exercise this mechanism directly and are noted as future work.

\section{Results}
\label{sec:results}

\paragraph{\textbf{Experimental methodology}}
The evaluation runs as two consecutive phases within the same simulation, on the EKS cluster \texttt{dkms1} (region \texttt{eu-north-1}). The first \emph{buffer-fill} phase initializes all ENC (encryption-side) buffers empty and switches no SAE traffic on: the MCMCF-$\lambda$ control plane is the only agent acting on the network, so it should drive every commodity toward the simultaneous-filling solution predicted by Proposition~\ref{prop:sync}. The second \emph{ETSI~014 round-trip} phase keeps the same simulation and superimposes a synchronized load injector: $30$ worker pods, each handling $500$ ordered SAE pairs (master and slave), generate ETSI~014 \texttt{enc\_keys}/\texttt{dec\_keys} round-trips at a mean per-pair rate of $\lambda_{\text{SAE}} = 2$~req/s with uniform-jittered inter-arrivals, for a total of $30\,000$ SAEs and a nominal aggregate demand of $30\,000$ round-trips/s ($60\,000$ ETSI~014 calls/s, counting the \texttt{enc\_keys} and \texttt{dec\_keys} halves separately) after a ramp of about $135$~s ($50 \to 500$ active pairs per worker in steps of $50$ every $15$~s) and a $120$~s hold.

For reference, this offered load deliberately exceeds what the cluster can sustainably deliver. Across the five realizations the network sustains a byte-matched end-to-end delivery of $7$--$10$\,k~keys/s during the hold (median $\approx 8.6$\,k~keys/s), where each completed round-trip consumes exactly one end-to-end relayed transport key. This ceiling is topology-dependent ---from $\sim\!7.1$\,k~keys/s on the sparsest Barab\'asi--Albert realization to $\sim\!10.1$\,k~keys/s on the RGG--- and lies well below the aggregate max-min-fair throughput the SDN allocates over these graphs given the link capacities ($\sim 23$--$35$\,k~keys/s), which places the binding constraint in the per-node generation-and-relay pipeline and the SDN fair-share rather than in the QKD link rate. Moreover, request latency under back-pressure throttles each SAE pair well below its nominal $2$~req/s, so the realized offered load plateaus near $11$--$12$\,k round-trips/s rather than the nominal $30$\,k; of that realized load, $62$--$85\%$ completes and the remainder is shed as ETSI~014 \texttt{429} responses ---the contract's intended back-pressure behavior.

\subsection{Buffer-fill phase: consistency with Proposition~\ref{prop:sync}}
\label{subsec:bufferfill}

Table~\ref{tab:saturation} aggregates the buffer-fill measurements. For each topology, every one of the $380$ commodities is tracked from its first \texttt{emit} event until its ENC buffer first crosses $0.95 \cdot B_{\max}$; the table reports the quartiles and maximum of those per-commodity saturation times $T_k$, together with the median of the ratio $T_k^{\text{obs}} / T_k^{\text{theor}}$, where $T_k^{\text{theor}} = B_{\max} / r_k^{\text{mm}}$ and $r_k^{\text{mm}}$ is the rate commodity $k$ receives under an ideal max--min fair allocation in which every commodity is restricted to its single shortest path and all $380$ contend for the deployed link capacities.

\begin{table}[t]
\centering
\caption{Buffer-fill phase. $T_k$ is the time, in seconds, between the first \texttt{emit} of commodity~$k$ and the first sample crossing $95\,\%$ of $B_{\max}=8\,192$ keys. Quartiles and the maximum are computed over the saturated commodities of each topology, and inherit the $5$~s sampling cadence of the generator-state trace; the ratio is the median of $T_k^{\text{obs}} / T_k^{\text{theor}}$, with $T_k^{\text{theor}}$ derived from an ideal max--min fair single-shortest-path allocation over the deployed (uniform) link capacities, with all $380$ commodities contending.}
\label{tab:saturation}
\footnotesize
\setlength{\tabcolsep}{4pt}
\begin{tabular}{@{}lcccccc@{}}
\toprule
Topology & Saturated & $Q_1$ [s] & Median [s] & $Q_3$ [s] & Max [s] & Ratio \\
\midrule
BA     & $380/380$ & $110$ & $175$ & $225$ & $325$ & $0.86$ \\
BA2    & $380/380$ & $134$ & $245$ & $325$ & $340$ & $0.75$ \\
ER     & $380/380$ & $165$ & $210$ & $220$ & $230$ & $1.41$ \\
RGG    & $380/380$ & $185$ & $205$ & $210$ & $285$ & $1.47$ \\
SECOQC & $380/380$ & $145$ & $245$ & $255$ & $270$ & $0.96$ \\
\bottomrule
\end{tabular}
\end{table}

The headline result is the first column: \textbf{all $380$ commodities saturate in all five realizations}, with median times between $175$ and $245$~s and every last buffer full before the six-minute mark. This is consistent with the synchronization contract of Proposition~\ref{prop:sync}: the LP does not merely admit every commodity, it drives all of them to their fill target within a common horizon, without any commodity being starved by the topology's bottlenecks.

The ratio column benchmarks the allocation, and three properties of the benchmark must be stated before reading it. First, it is \emph{static}: each commodity is assumed to hold its max--min rate for the entire fill, whereas in the real system a commodity that completes releases its share and the remaining ones accelerate over the tail. Second, $T_k^{\text{obs}}$ is measured up to $0.95 \cdot B_{\max}$ while $T_k^{\text{theor}}$ assumes the full buffer, shifting ratios downward by a factor of about $0.95$. Third, the observation clock starts at the first $5$~s sample with a positive emit count, which can exclude up to $10$--$20\,\%$ of the fill from the measured time. Ratios below unity are therefore expected even without any multipath contribution. With that lens, the five medians sit within $\pm 50\,\%$ of the ideal single-path benchmark ($0.75$--$1.47$), despite the system operating through a real transport plane and under the intermittent phase-2 regime of Section~\ref{subsec:lex}, which alternates multipath WCMP episodes with shortest-path fallback. BA and BA2 land lowest ($0.86$ and $0.75$), consistent with ---though not attributable solely to--- the multipath episodes routing around the hubs that throttle any single-path scheme. SECOQC sits at $0.96$, essentially at the benchmark, as expected of a ring-with-chords topology with few alternative routes. ER and RGG land above unity ($1.41$ and $1.47$): these are the two densest realizations, where the benchmark rates are highest and the relay and fallback overheads weigh correspondingly more.

\begin{figure*}[t]
    \centering
    \subcaptionbox{Erdős--Rényi ($N=20$).}{\includegraphics[width=0.48\textwidth]{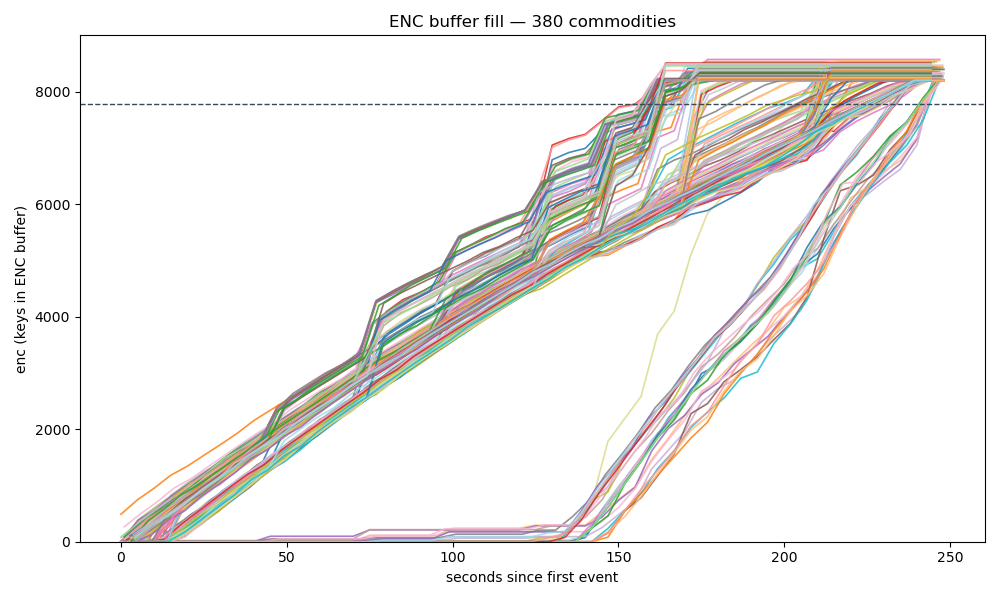}}
    \hfill
    \subcaptionbox{Random Geometric Graph ($N=20$).}{\includegraphics[width=0.48\textwidth]{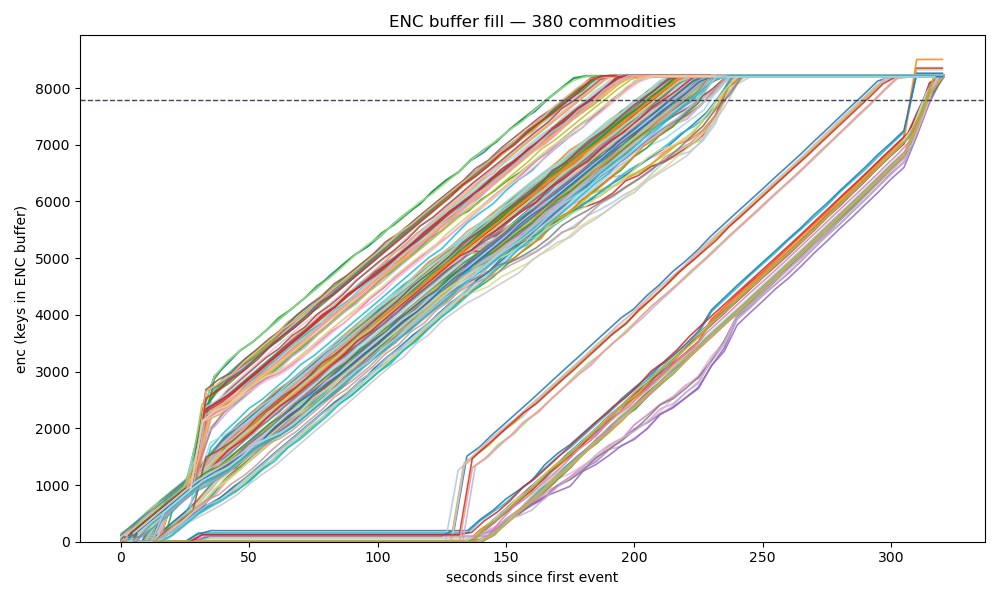}}
    \caption{ENC level per commodity over time during the buffer-fill phase; the dashed line at $0.95 \cdot B_{\max}$ marks the saturation threshold. The $380$ trajectories converge toward the same horizon rather than each pair filling independently, illustrating Proposition~\ref{prop:sync} empirically; a few trajectories briefly overshoot $B_{\max}$, which the control plane treats as a fill target rather than a hard cap.}
    \label{fig:buffer_fill}
\end{figure*}

The two panels of Figure~\ref{fig:buffer_fill} show the per-commodity ENC trajectories for ER and RGG and make the synchronization visible: trajectories converge toward a common horizon rather than each pair filling independently.

\subsection{ETSI~014 round-trip phase: SAE-level correctness and throughput}

Table~\ref{tab:etsi014} summarizes the outcome of every SAE round-trip attempted during the load phase, per topology realization.

\begin{table}[t]
\centering
\caption{ETSI~014 round-trip phase ($30$ workers $\times$ $500$ SAE pairs, $\lambda_{\text{SAE}} = 2$~req/s per pair, $N=20$). A ``match'' is a round-trip whose \texttt{enc\_keys} and \texttt{dec\_keys} halves both succeeded and returned byte-identical key material. HTTP~$429$ is the token-bucket back-pressure of Section~\ref{sec:buffers}; HTTP~$502$ is a transient nginx-ingress upstream timeout.}
\label{tab:etsi014}
\footnotesize
\setlength{\tabcolsep}{3pt}
\begin{tabular}{@{}lrrrrr@{}}
\toprule
 & BA & BA2 & ER & RGG & SECOQC \\
\midrule
Round-trips & $2\,206\,336$ & $2\,189\,544$ & $2\,279\,510$ & $2\,329\,499$ & $2\,186\,046$ \\
Match & $1\,829\,594$ & $1\,702\,655$ & $1\,896\,878$ & $2\,121\,672$ & $1\,978\,376$ \\
Match \% & $82.92$ & $77.76$ & $83.21$ & $91.08$ & $90.50$ \\
Mismatch & $0$ & $0$ & $100\,709$ & $38$ & $0$ \\
HTTP 429 & $376\,737$ & $486\,865$ & $281\,725$ & $207\,733$ & $207\,650$ \\
429 \% & $17.08$ & $22.24$ & $12.36$ & $8.92$ & $9.50$ \\
HTTP 502 & $5$ & $24$ & $24$ & $55$ & $20$ \\
Other & $0$ & $0$ & $174$ & $1$ & $0$ \\
\bottomrule
\end{tabular}
\end{table}

Three things deserve emphasis. First, \textbf{byte-level key match is perfect in three of the five realizations}: BA, BA2 and SECOQC complete millions of round-trips without a single mismatch, and RGG shows $38$ mismatches in $2.1$~million completions ($99.998\,\%$ byte-identical). The delivery pipeline ---session keys wrapped by one-time pad with transport keys that were themselves relayed under per-link OTP and the ORR's end-to-end ML-KEM wrap--- preserves key integrity at scale. The exception is ER, whose $100\,709$ mismatches ($4.4\,\%$ of round-trips) are concentrated in a burst and remain under investigation; the anomaly does not reproduce in the other four realizations and is registered as an open issue.

Second, the dominant non-success outcome everywhere is \textbf{HTTP~$429$ back-pressure, not error}: between $8.92\,\%$ (RGG) and $22.24\,\%$ (BA2) of requests are cleanly rejected by the per-SAE token buckets of Section~\ref{sec:buffers} because the offered load exceeds what the key-generation plane can sustain. Genuine transport errors are negligible: HTTP~$502$ (transient ingress timeouts) peaks at $55$ per run and the residual ``Other'' count at $174$ (ER), out of $2.2$--$2.3$~million attempts; Figure~\ref{fig:sae_outcome} shows the RGG breakdown. RGG's position at the low end of the back-pressure range is consistent with its link count: at uniform per-link capacity, its $46$ links give it roughly $50$--$65\,\%$ more aggregate key-generation capacity than the three sparsest realizations ($28$--$31$ links) for the same offered load.

Third, the system \textbf{degrades exactly as designed}: the correct queueing behavior under a nominal offered load several times above the rate the cluster can sustainably generate is to convert the excess into explicit, retryable back-pressure while keeping every served key byte-correct ---which the system achieves outright in three realizations and on $99.998\,\%$ of completions in RGG, ER's anomaly excepted. This is precisely what the $429$/match split of Table~\ref{tab:etsi014} shows.

\begin{figure}[t]
    \centering
    \includegraphics[width=\columnwidth]{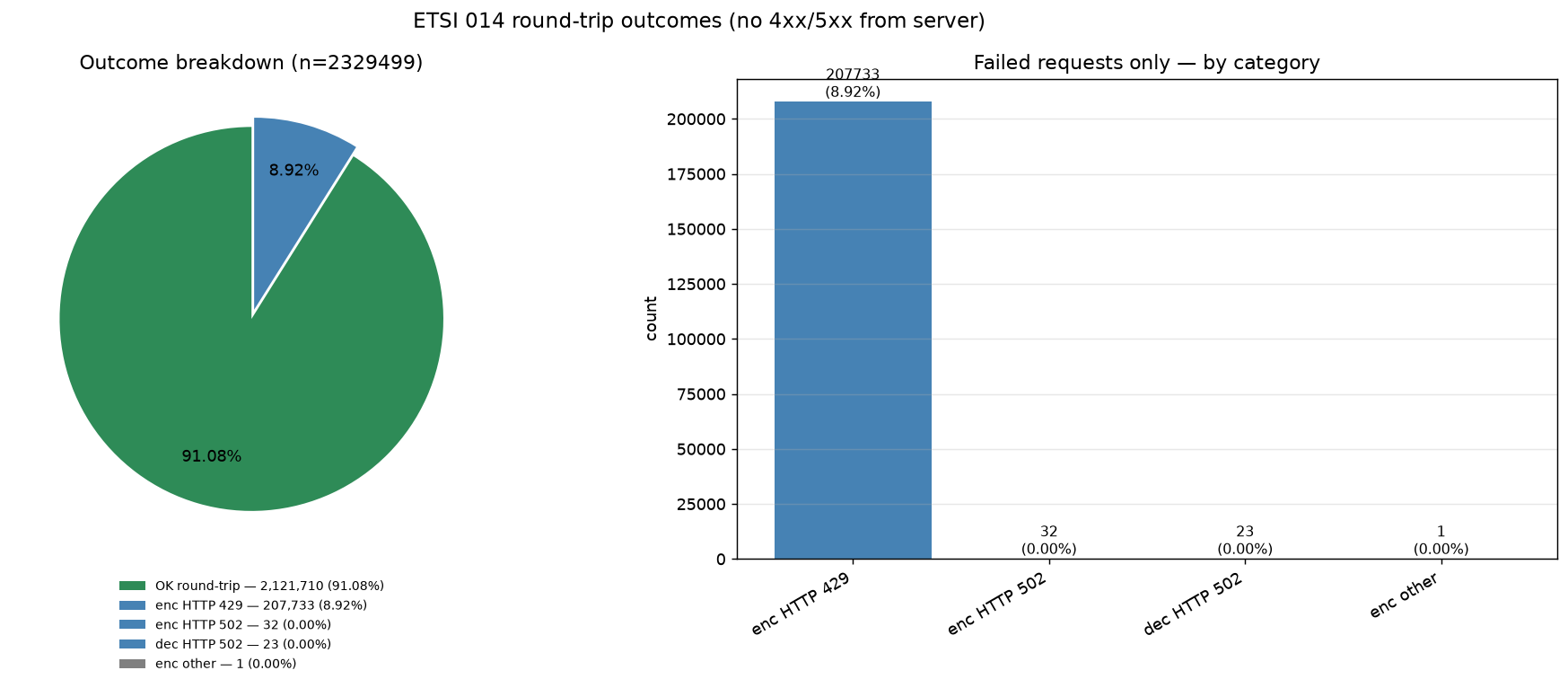}
    \caption{Outcome breakdown of the $2\,329\,499$ SAE round-trips of the RGG run. Green: byte-identical match; orange: token-bucket back-pressure (HTTP~$429$); the residual slice aggregates $55$ transient HTTP~$502$, $38$ mismatches and $1$ client error.}
    \label{fig:sae_outcome}
\end{figure}

Figure~\ref{fig:sae_rps} shows the RGG round-trip rate over time. The staircase ramp is visible up to $\approx 135$~s ($50 \to 500$ pairs per worker, reaching $\approx 12\,000$~rps with the match line tracking the total); the periodic dips to zero coincide with the injector's $15$~s ramp steps, during which each worker briefly halts traffic to register its next batch of SAE pairs before resuming injection, driving the instantaneous rate momentarily to zero. After the ramp, the regime holds at about $11$--$12$~krps of total round-trips ($\approx 9.6$~krps byte-matched) for the entire hold window, with the gap between the two lines being precisely the token-bucket rejections analyzed below. This byte-matched rate ($\approx 9.6$~krps) transiently exceeds the sustained generation ceiling of $\sim 6.6$\,k~keys/s: over a finite hold, matches also drain the pre-filled transport-key buffers ---an aggregate stock of $380 \times 8\,192 \approx 3.1$~million keys, which comfortably covers the $\approx 3$\,k~keys/s deficit over the $120$~s window--- so the $6.6$\,k figure bounds the indefinitely sustainable rate, not the transient one.

\begin{figure}[t]
    \centering
    \includegraphics[width=\columnwidth]{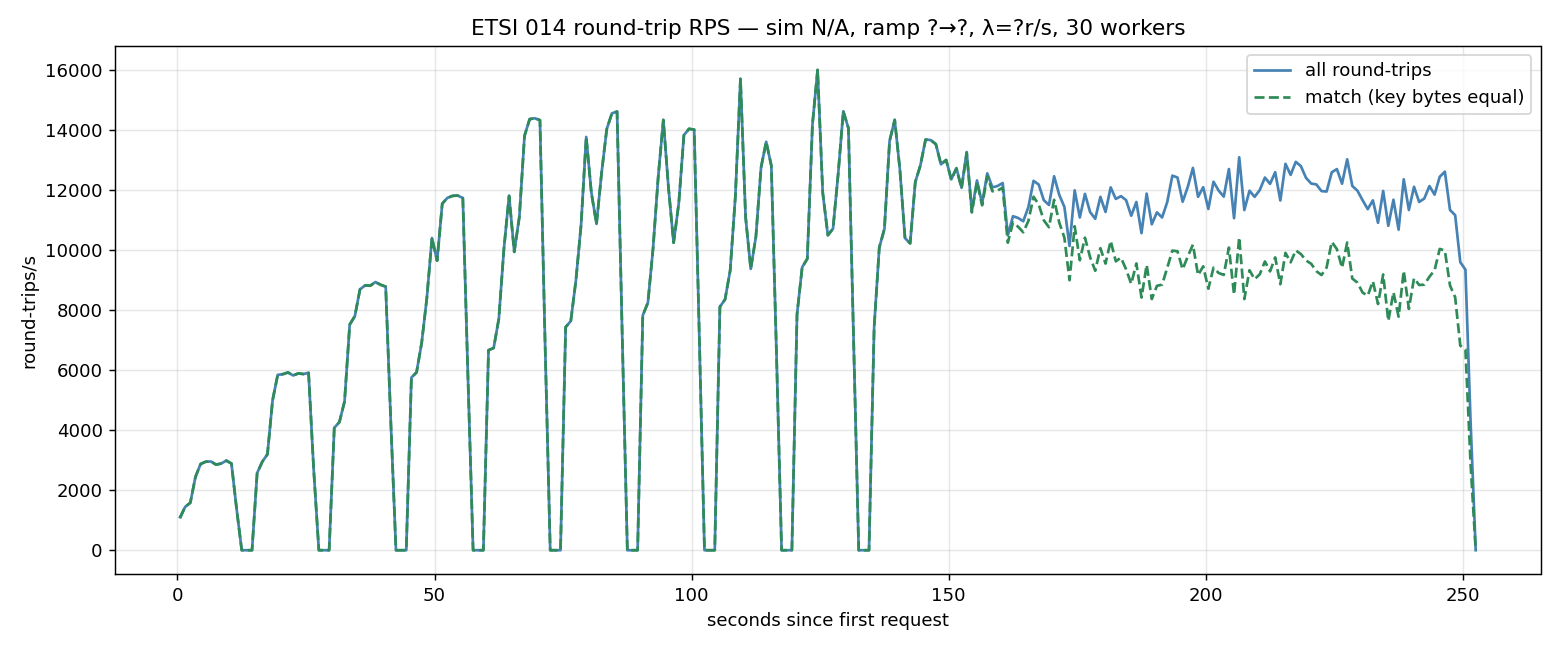}
    \caption{ETSI~014 round-trip rate over time (RGG, $N=20$). The ramp ($50 \to 500$ active pairs per worker) is visible up to $\approx 135$~s; the hold sits around $11$--$12$~krps total ($\approx 9.6$~krps byte-matched).}
    \label{fig:sae_rps}
\end{figure}

Figure~\ref{fig:sae_match_vs_429} decomposes the same run into its two dominant outcomes. The match rate (green) and the $429$ rate (red) are complementary: when the ramp pushes the offered load past the sustainable rate, the token buckets convert the excess into rejections while the match rate stays pinned at the transiently deliverable rate. The red band is not noise: it is the visible, quantitative signature of the graceful-degradation design of the system.

\begin{figure}[t]
    \centering
    \includegraphics[width=\columnwidth]{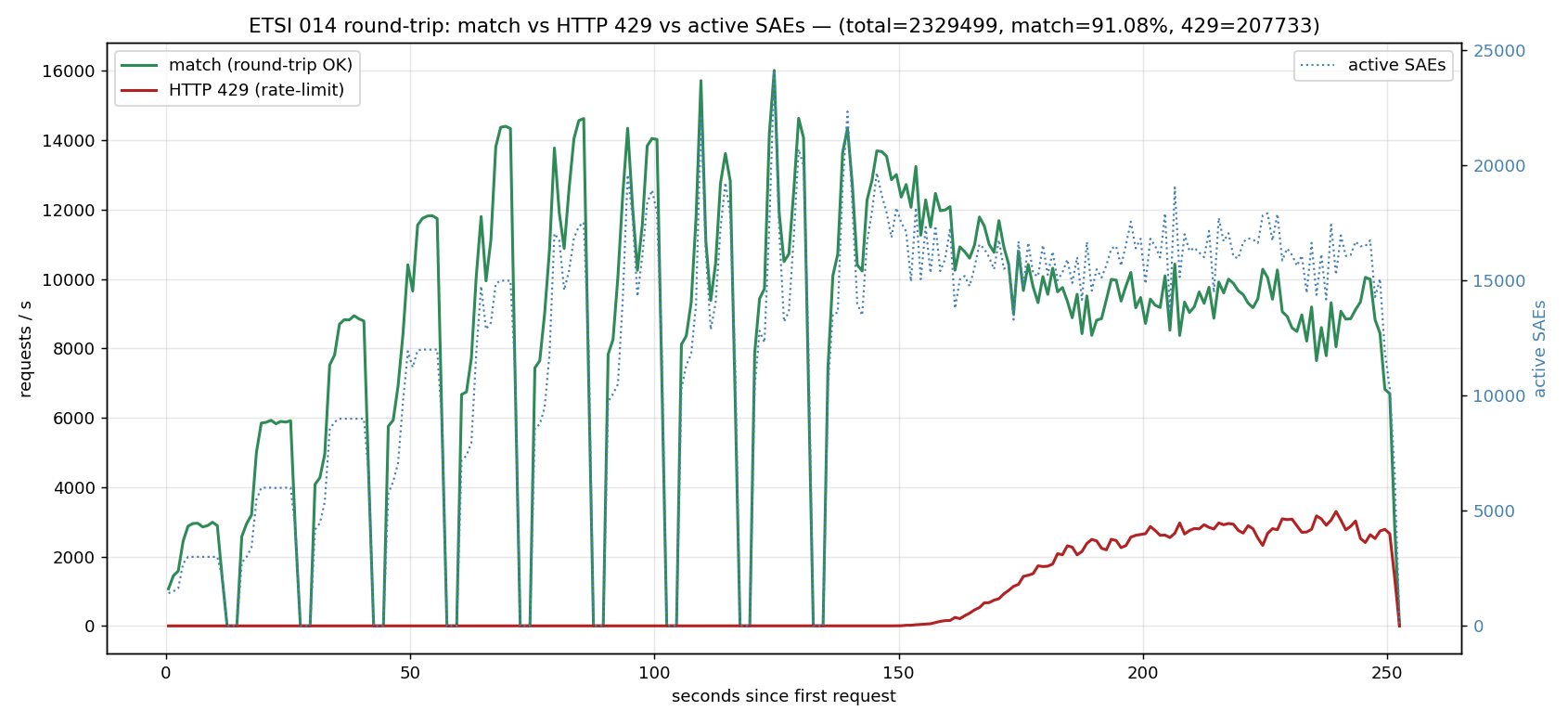}
    \caption{Byte-matched round-trips (green) versus token-bucket rejections (red) per second, ETSI~014 round-trip phase, RGG. The two series are complementary: the excess offered load is absorbed as HTTP~$429$ back-pressure while the served keys remain byte-correct.}
    \label{fig:sae_match_vs_429}
\end{figure}

Figure~\ref{fig:sae_latency} examines the latency of the two ETSI~014 halves. Median (p50) latency stays low throughout the run for both \texttt{enc\_keys} and \texttt{dec\_keys}; the p95 tail spikes transiently during the final ramp steps and then settles at $2$--$3$~s in the back-pressured region; this tail is dominated by \emph{accepted} requests waiting on the synchronous ETSI~020 delivery to the peer KMS under load, and on the shared ingress, while token-bucket rejections themselves fail fast. The figure confirms that median latency is unaffected by back-pressure; only the p95/p99 tails respond.

\begin{figure}[t]
    \centering
    \includegraphics[width=\columnwidth]{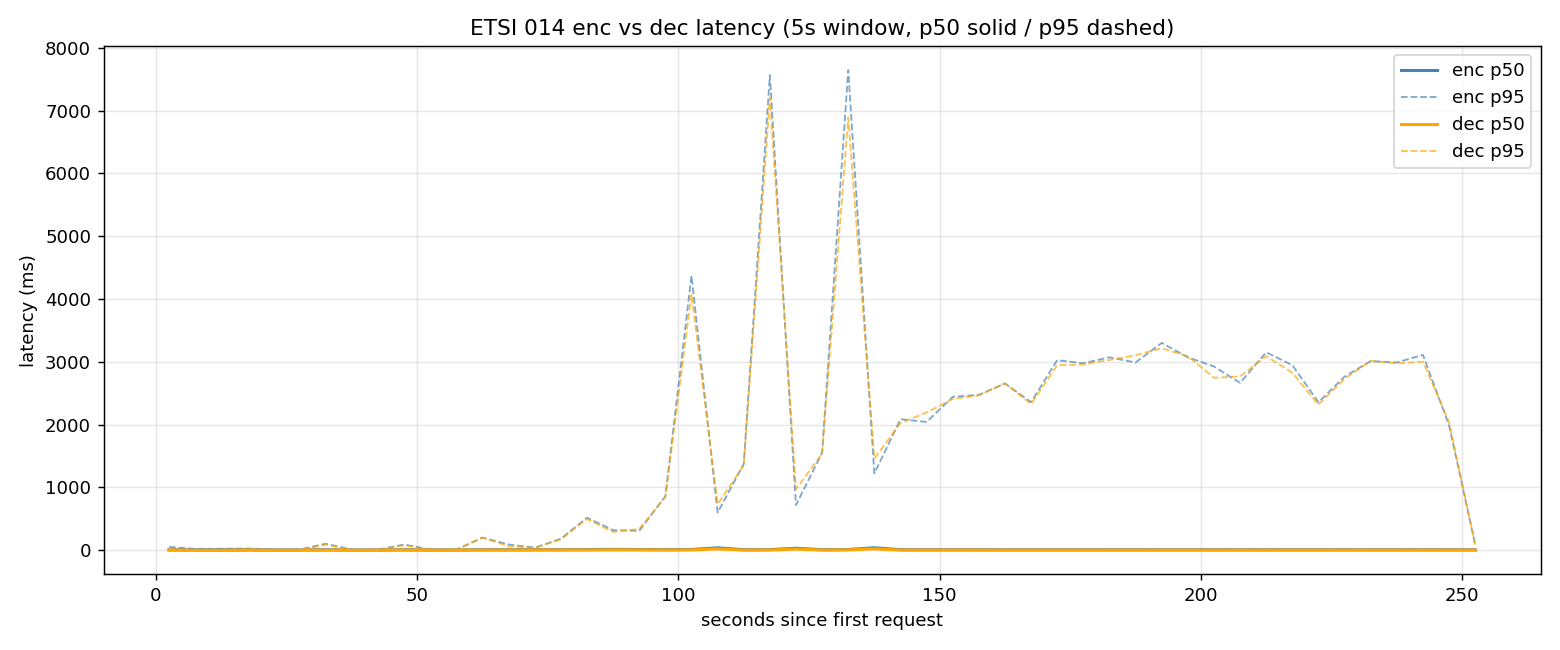}
    \caption{\texttt{enc\_keys} and \texttt{dec\_keys} latency per $5$-second window (RGG). Median (p50) latency stays below $50$~ms throughout the run; the p95 tail spikes transiently during the final ramp steps ($t \approx 100$--$135$~s) and then settles at $2$--$3$~s in the back-pressured region, dominated by accepted requests waiting on the synchronous ETSI~020 delivery to the peer KMS under load (token-bucket rejections fail fast).}
    \label{fig:sae_latency}
\end{figure}

Figure~\ref{fig:sae_keys_per_dkms} verifies that the work is distributed essentially uniformly across the $20$ DKMSs of the topology (the spread from the least to the most loaded DKMS is about $7\,\%$), the expected outcome of a uniform offered load combined with the globally fair SDN allocation, with the per-SAE buckets equalizing shares within each DKMS.

\begin{figure}[t]
    \centering
    \includegraphics[width=\columnwidth]{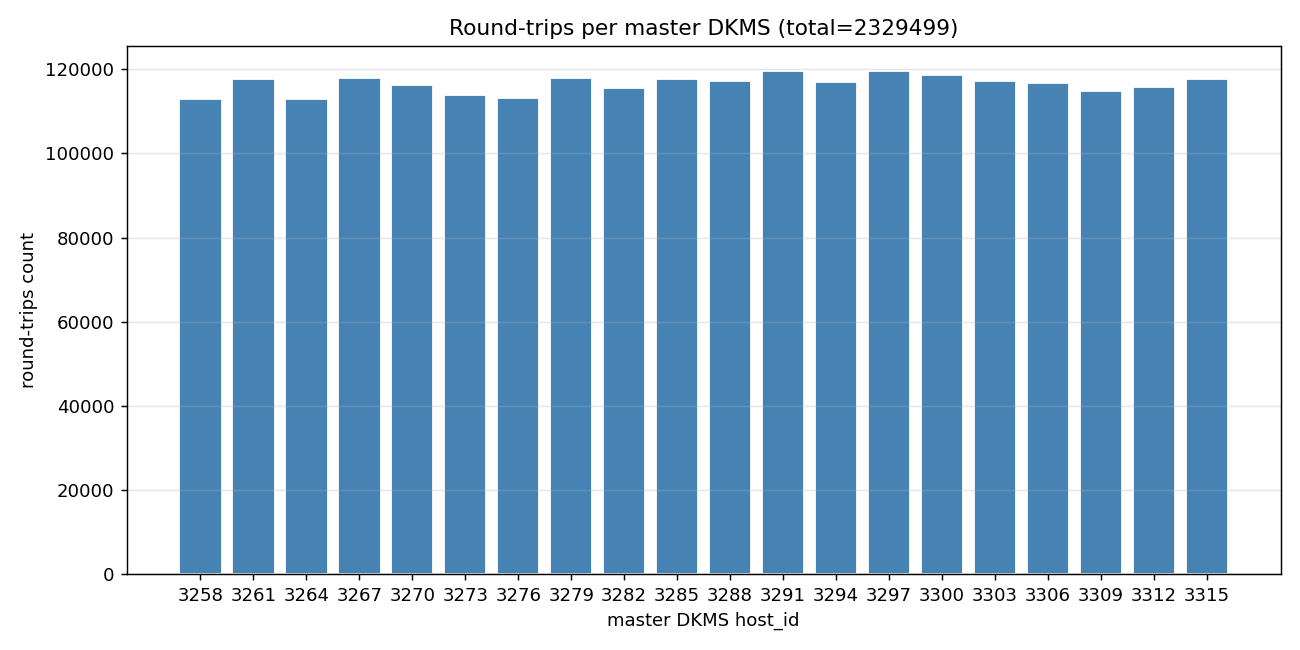}
    \caption{Round-trips served per master DKMS (RGG, $N=20$). The distribution is essentially flat across the $20$ DKMSs (range $112\,000$--$120\,000$).}
    \label{fig:sae_keys_per_dkms}
\end{figure}

A practical implication of BA versus BA2 deserves a brief comment. The two graphs are drawn from the \emph{same} family (Barabási--Albert, $N=20$; $31$ vs.\ $28$ links) and differ only in the random seed used to generate the link set, yet their match percentages differ by more than five points ($82.92\,\%$ vs.\ $77.76\,\%$) and their back-pressure rates equally so ($17.08\,\%$ vs.\ $22.24\,\%$ of HTTP~$429$); Figure~\ref{fig:sae_match_vs_429_ba2_compare} shows the BA2 time series, whose back-pressure region is visibly larger than RGG's in Figure~\ref{fig:sae_match_vs_429}. The structural lesson is that a single realization of a random family is not enough to characterize the family: a topology-level percentage should be quoted with a confidence interval over several seeds.

\begin{figure}[t]
    \centering
    \includegraphics[width=\columnwidth]{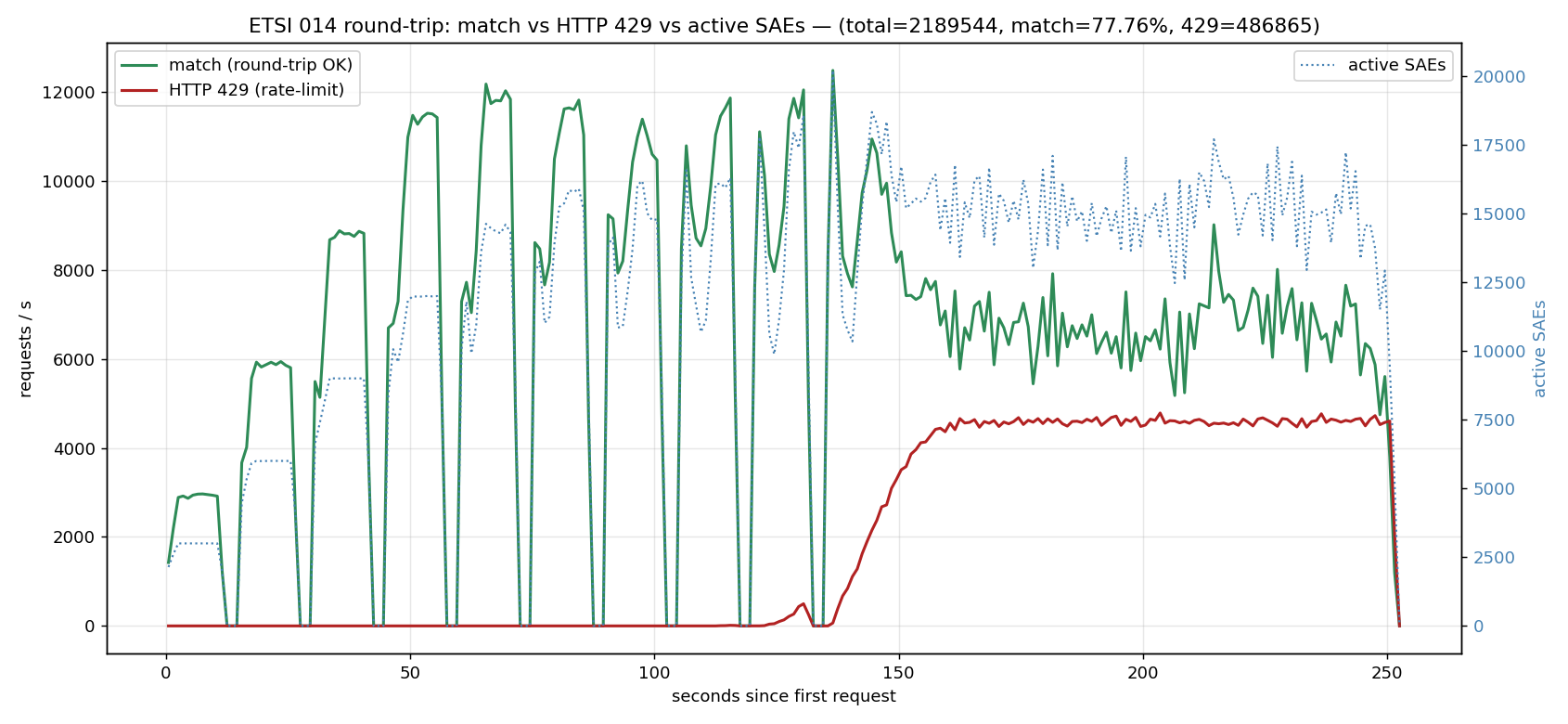}
    \caption{Same axes as Figure~\ref{fig:sae_match_vs_429} but for BA2 (a different seed of the same Barabási--Albert family, $N=20$). The back-pressure region (red) is substantially larger than in the RGG run ($22.24\,\%$ of requests vs.\ $8.92\,\%$) and five points above its sibling seed BA ($17.08\,\%$): a single seed does not characterize the family.}
    \label{fig:sae_match_vs_429_ba2_compare}
\end{figure}

\subsection{Comparative synthesis}

Table~\ref{tab:resumen} synthesizes the dominant behavior of each realization across both phases.

\begin{table}[t]
\centering
\caption{Comparative synthesis of the five topology realizations ($N=20$, $380$ commodities). The ratio row repeats the buffer-fill benchmark of Table~\ref{tab:saturation}: values below $1$ mean faster than an ideal max--min fair single-path allocation, subject to the downward measurement biases discussed in the text.}
\label{tab:resumen}
\footnotesize
\setlength{\tabcolsep}{3pt}
\begin{tabular}{@{}lccccc@{}}
\toprule
 & BA & BA2 & ER & RGG & SECOQC \\
\midrule
Saturated commodities & $380$ & $380$ & $380$ & $380$ & $380$ \\
Median fill time [s] & $175$ & $245$ & $210$ & $205$ & $245$ \\
Fill-time ratio vs.\ single path & $0.86$ & $0.75$ & $1.41$ & $1.47$ & $0.96$ \\
Match \% & $82.92$ & $77.76$ & $83.21$ & $91.08$ & $90.50$ \\
Back-pressure (429) \% & $17.08$ & $22.24$ & $12.36$ & $8.92$ & $9.50$ \\
Mismatches & $0$ & $0$ & $100\,709$ & $38$ & $0$ \\
\bottomrule
\end{tabular}
\end{table}

Read together, the two phases support three claims. (i)~The MCMCF-$\lambda$ control plane delivers on its synchronization contract in every topology family evaluated: $380/380$ commodities saturated, all within six minutes, regardless of whether the graph is hub-dominated, sparse-random, geometric or ring-like. (ii)~Benchmarked against an ideal max--min single-path allocation, the measured fill times of all five realizations sit within $\pm 50\,\%$ of the ideal (median ratios $0.75$--$1.47$), with the sub-unity ratios of the hub-dominated BA seeds ($0.86$ and $0.75$) consistent with ---though not attributable solely to--- the multipath episodes routing around their hubs, and the two densest graphs (ER, RGG) trailing the benchmark under the intermittent phase-2 regime of Section~\ref{subsec:lex}. (iii)~Under an offered load several times above what the cluster can sustainably generate (a nominal $30\,000$ round-trips/s against a sustained generation ceiling of $\sim 6.6$\,k~keys/s, roughly $4.5\times$), the system absorbs the excess as HTTP~$429$ back-pressure rather than as data loss: byte-level key match is perfect in three of the five realizations and holds on $99.998\,\%$ of RGG completions, with ER's mismatch burst standing as the sole open anomaly.

\section{Conclusions and Future Work}
\label{sec:conclusions}

This work has presented the design of a distributed quantum key manager (DKMS) whose central contribution is to move the capacity allocation of the network to a global optimization problem: an SDN control plane that formulates the filling of all the key buffers as an instance of Maximum Concurrent Multi-Commodity Flow with demand scaling (MCMCF-$\lambda$), shifting the capacity and route decision from a hop-by-hop reservation scheme ---which can admit a flow on the first link only for it to die further along--- to a centralized control plane that guarantees end-to-end admission over the complete graph for every allocation its solver completes. The filling-synchronization property gives the operator a single scalar, $\lambda$, that summarizes the saturation state of the entire network.

Several lines remain open. The immediate ones are operational: replacing the in-memory simplex solver with a sparse interior-point backend (HiGHS) so that the lexicographic second phase ---today succeeding in only about a third of the solves at scale--- delivers its multipath forwarding and residual-bandwidth bonus consistently instead of intermittently, and hardening the QKC-side failure detection so that a downed link is reported to the SDN without operator intervention. A second group concerns the control plane itself: enriching the forwarding scheme with per-entry TTLs and \emph{pull-on-miss} refresh, feeding observed link rates $\hat{u}_{ij}(t)$ back into the capacity matrix, and studying the stability of the control loop under adversarial demand patterns. A third group concerns evaluation: investigating the byte-level mismatch observed under sustained ER load, whose burst concentration suggests a race in the delivery pipeline rather than a systematic defect of the measurement harness; deploying campaigns with heterogeneous link lengths, so that the attenuation-aware allocation 
is exercised on non-uniform capacities.

\bibliographystyle{IEEEtran}
\bibliography{lib}

@Article{app14031018,
AUTHOR = {Martin, Raul and Lopez, Blanca and Vidal, Ivan and Valera, Francisco and Nogales, Borja},
TITLE = {Service for Deploying Digital Twins of QKD Networks},
JOURNAL = {Applied Sciences},
VOLUME = {14},
YEAR = {2024},
NUMBER = {3},
ARTICLE-NUMBER = {1018},
URL = {https://www.mdpi.com/2076-3417/14/3/1018},
ISSN = {2076-3417},
DOI = {10.3390/app14031018}
}

@article{10.1145/3730575,
author = {Dervisevic, Emir and Tankovic, Amina and Fazel, Ehsan and Kompella, Ramana and Fazio, Peppino and Voznak, Miroslav and Mehic, Miralem},
title = {Quantum Key Distribution Networks - Key Management: A Survey},
year = {2025},
issue_date = {October 2025},
publisher = {Association for Computing Machinery},
address = {New York, NY, USA},
volume = {57},
number = {10},
issn = {0360-0300},
url = {https://doi.org/10.1145/3730575},
doi = {10.1145/3730575},
journal = {ACM Comput. Surv.},
month = may,
articleno = {257},
numpages = {36}
}

@techreport{mceliece1978,
  author      = {McEliece, Robert J.},
  title       = {A Public-Key Cryptosystem Based on Algebraic Coding Theory},
  institution = {Jet Propulsion Laboratory},
  note        = {DSN Progress Report 42--44},
  pages       = {114--116},
  year        = {1978}
}

@inproceedings{bb84,
  author    = {Bennett, Charles H. and Brassard, Gilles},
  title     = {Quantum Cryptography: Public Key Distribution and Coin Tossing},
  booktitle = {Proc. IEEE Int. Conf. on Computers, Systems and Signal Processing},
  pages     = {175--179},
  year      = {1984}
}

@inproceedings{elliott2005darpa,
  author    = {Elliott, Chip and Colvin, Alexander and Pearson, David and
               Pikalo, Oleksiy and Schlafer, John and Yeh, Henry},
  title     = {Current Status of the {DARPA} Quantum Network},
  booktitle = {Quantum Information and Computation III},
  series    = {Proc. SPIE},
  volume    = {5815},
  pages     = {138--149},
  year      = {2005},
  doi       = {10.1117/12.606489}
}

@article{poppe2008secoqc,
  author  = {Poppe, Andreas and Peev, Momtchil and Maurhart, Oliver},
  title   = {Outline of the {SECOQC} Quantum-Key-Distribution Network in Vienna},
  journal = {International Journal of Quantum Information},
  volume  = {6},
  number  = {2},
  pages   = {209--218},
  year    = {2008},
  doi     = {10.1142/S0219749908003529}
}

@article{sasaki2011tokyo,
  author  = {Sasaki, M. and Fujiwara, M. and Ishizuka, H. and Klaus, W. and
             Wakui, K. and Takeoka, M. and Miki, S. and Yamashita, T. and
             Wang, Z. and Tanaka, A. and others},
  title   = {Field Test of Quantum Key Distribution in the {Tokyo} {QKD} Network},
  journal = {Optics Express},
  volume  = {19},
  number  = {11},
  pages   = {10387--10409},
  year    = {2011},
  doi     = {10.1364/OE.19.010387}
}

@article{stucki2011swissquantum,
  author  = {Stucki, D. and Legr{\'e}, M. and Buntschu, F. and Clausen, B. and
             Felber, N. and Gisin, N. and Henzen, L. and Junod, P. and
             Litzistorf, G. and Monbaron, P. and others},
  title   = {Long-term Performance of the {SwissQuantum} Quantum Key Distribution
             Network in a Field Environment},
  journal = {New Journal of Physics},
  volume  = {13},
  number  = {12},
  pages   = {123001},
  year    = {2011},
  doi     = {10.1088/1367-2630/13/12/123001}
}

@article{dynes2019,
  author  = {Dynes, J. F. and Wonfor, A. and Tam, W. W.-S. and Sharpe, A. W. and
             Takahashi, R. and Lucamarini, M. and Plews, A. and Yuan, Z. L. and
             Dixon, A. R. and Cho, J. and Tanizawa, Y. and Elbers, J.-P. and
             Grei{\ss}er, H. and White, I. H. and Penty, R. V. and Shields, A. J.},
  title   = {Cambridge quantum network},
  journal = {npj Quantum Information},
  volume  = {5},
  number  = {1},
  pages   = {101},
  year    = {2019},
  doi     = {10.1038/s41534-019-0221-4}
}

@article{chen2021,
  author  = {Chen, Yu-Ao and Zhang, Qiang and Chen, Teng-Yun and Cai, Wen-Qi and
             Liao, Sheng-Kai and Zhang, Jun and Chen, Kai and Yin, Juan and
             Ren, Ji-Gang and Chen, Zhu and others},
  title   = {An integrated space-to-ground quantum communication network over
             4,600 kilometres},
  journal = {Nature},
  volume  = {589},
  number  = {7841},
  pages   = {214--219},
  year    = {2021},
  doi     = {10.1038/s41586-020-03093-8}
}

@article{martin2024madqci,
  author  = {Martin, V. and Brito, J. P. and Ort{\'i}z, L. and M{\'e}ndez, R. B. and
             Buruaga, J. S. and Vicente, R. J. and Sebasti{\'a}n-Lombra{\~n}a, A. and
             Rinc{\'o}n, D. and P{\'e}rez, F. and S{\'a}nchez, C. and others},
  title   = {{MadQCI}: A Heterogeneous and Scalable {SDN-QKD} Network Deployed in
             Production Facilities},
  journal = {npj Quantum Information},
  volume  = {10},
  number  = {1},
  pages   = {80},
  year    = {2024},
  doi     = {10.1038/s41534-024-00873-2}
}

@article{aguado2019sdn,
  author  = {Aguado, Alejandro and L{\'o}pez, V{\'i}ctor and L{\'o}pez, Diego and
             Peev, Momtchil and Poppe, Andreas and Pastor, Antonio and
             Folgueira, Jes{\'u}s and Mart{\'i}n, Vicente},
  title   = {The Engineering of Software-Defined Quantum Key Distribution Networks},
  journal = {IEEE Communications Magazine},
  volume  = {57},
  number  = {7},
  pages   = {20--26},
  year    = {2019},
  doi     = {10.1109/MCOM.2019.1800763}
}

@misc{openqkd,
  author       = {{OpenQKD Consortium}},
  title        = {{OPENQKD}: Open European Quantum Key Distribution Testbed},
  howpublished = {H2020 Project 857156, CORDIS, European Commission},
  year         = {2019},
  note         = {\url{https://cordis.europa.eu/project/id/857156}}
}

@misc{euroqci,
  author       = {{European Commission}},
  title        = {{European Quantum Communication Infrastructure (EuroQCI)}},
  year         = {2025},
  url          = {https://digital-strategy.ec.europa.eu/en/policies/european-quantum-communication-infrastructure-euroqci},
  note         = {Official European Commission page, accessed 2026-03-27}
}

@article{cao2022qinternet,
  author  = {Cao, Yuan and Zhao, Yongli and Wang, Qin and Zhang, Jie and
             Ng, Soon Xin and Hanzo, Lajos},
  title   = {The Evolution of Quantum Key Distribution Networks: On the Road to
             the {Qinternet}},
  journal = {IEEE Communications Surveys \& Tutorials},
  volume  = {24},
  number  = {2},
  pages   = {839--894},
  year    = {2022},
  doi     = {10.1109/COMST.2022.3144219}
}

@article{mehic2020networking,
  author  = {Mehic, Miralem and Niemiec, Marcin and Rass, Stefan and Ma, Jiajun and
             Peev, Momtchil and Aguado, Alejandro and Martin, Vicente and
             Schauer, Stefan and Poppe, Andreas and Pacher, Christoph and
             Voznak, Miroslav},
  title   = {Quantum Key Distribution: A Networking Perspective},
  journal = {ACM Computing Surveys},
  volume  = {53},
  number  = {5},
  pages   = {1--41},
  year    = {2020},
  doi     = {10.1145/3402192}
}

@article{pirandola2017plob,
  author  = {Pirandola, Stefano and Laurenza, Riccardo and Ottaviani, Carlo and
             Banchi, Leonardo},
  title   = {Fundamental Limits of Repeaterless Quantum Communications},
  journal = {Nature Communications},
  volume  = {8},
  pages   = {15043},
  year    = {2017},
  doi     = {10.1038/ncomms15043}
}

@article{yuan2018mbps,
  author  = {Yuan, Z. and Plews, A. and Takahashi, R. and Doi, K. and Tam, W. and
             Sharpe, A. W. and Dixon, A. R. and Lavelle, E. and Dynes, J. F. and
             Murakami, A. and Kujiraoka, M. and Lucamarini, M. and Tanizawa, Y. and
             Sato, H. and Shields, A. J.},
  title   = {10-{Mb/s} Quantum Key Distribution},
  journal = {Journal of Lightwave Technology},
  volume  = {36},
  number  = {16},
  pages   = {3427--3433},
  year    = {2018},
  doi     = {10.1109/JLT.2018.2843136}
}

@article{liu2023tfqkd1000,
  author  = {Liu, Yang and Zhang, Wei-Jun and Jiang, Cong and Chen, Jiu-Peng and
             Zhang, Chi and Pan, Wen-Xin and Ma, Di and Dong, Hao and
             Xiong, Jia-Min and Zhang, Cheng-Jun and Li, Hao and Wang, Rui-Chun and
             Wu, Jun and Chen, Teng-Yun and You, Lixing and Wang, Xiang-Bin and
             Zhang, Qiang and Pan, Jian-Wei},
  title   = {Experimental Twin-Field Quantum Key Distribution over 1000 km Fiber
             Distance},
  journal = {Physical Review Letters},
  volume  = {130},
  number  = {21},
  pages   = {210801},
  year    = {2023},
  doi     = {10.1103/PhysRevLett.130.210801}
}

@article{quantumrepeater,
  author  = {Azuma, Koji and Economou, Sophia E. and Elkouss, David and
             Hilaire, Paul and Jiang, Liang and Lo, Hoi-Kwong and Tzitrin, Ilan},
  title   = {Quantum Repeaters: From Quantum Networks to the Quantum Internet},
  journal = {Reviews of Modern Physics},
  volume  = {95},
  number  = {4},
  pages   = {045006},
  year    = {2023},
  doi     = {10.1103/RevModPhys.95.045006}
}

@article{tysowski2018,
  author  = {Tysowski, Piotr K. and Ling, Xinhua and L{\"u}tkenhaus, Norbert and
             Mosca, Michele},
  title   = {The Engineering of a Scalable Multi-Site Communications System
             Utilizing Quantum Key Distribution ({QKD})},
  journal = {Quantum Science and Technology},
  volume  = {3},
  number  = {2},
  pages   = {024001},
  year    = {2018},
  doi     = {10.1088/2058-9565/aa9a5d}
}

@techreport{etsi014,
  author       = {{ETSI}},
  title        = {{Quantum Key Distribution (QKD); Protocol and Data Format of REST-based Key Delivery API}},
  institution  = {{European Telecommunications Standards Institute}},
  number       = {GS QKD 014 V1.1.1},
  year         = {2019},
  month        = feb,
  url          = {https://www.etsi.org/deliver/etsi_gs/QKD/001_099/014/01.01.01_60/gs_qkd014v010101p.pdf}
}

@techreport{etsi020,
  title = {ETSI GS QKD 020: Quantum Key Distribution (QKD); Interoperable Key Management System (KMS) API},
  institution  = {{European Telecommunications Standards Institute}},
  author = {{ETSI}},
  year = {2023},
url = {https://www.etsi.org/deliver/etsi_gs/QKD/001_099/020/01.01.01_60/gs_QKD020v010101p.pdf}
}

@techreport{etsi015,
  author       = {{ETSI}},
  title        = {{Quantum Key Distribution (QKD); Control Interface for Software Defined Networks}},
  institution  = {{European Telecommunications Standards Institute}},
  number       = {GS QKD 015 V2.1.1},
  year         = {2022},
  month        = apr,
  url          = {https://www.etsi.org/deliver/etsi_gs/QKD/001_099/015/02.01.01_60/gs_qkd015v020101p.pdf}
}

@misc{etsi018,
  author       = {{ETSI}},
  title        = {Quantum Key Distribution (QKD); Orchestration Interface for
                  Software Defined Networks},
  howpublished = {ETSI GS QKD 018 V1.1.1},
  year         = {2022},
  note         = {European Telecommunications Standards Institute}
}

@techreport{itu_y3800,
  author       = {{ITU-T}},
  title        = {{Recommendation ITU-T Y.3800: Overview on Networks Supporting Quantum Key Distribution}},
  institution  = {{International Telecommunication Union}},
  number       = {Y.3800},
  year         = {2019},
  month        = oct,
  url          = {https://www.itu.int/rec/T-REC-Y.3800-201910-I/en},
  note         = {Approved 2019-10-25}
}

@article{yang2018secretkey,
  author  = {Yang, Chao and Zhang, Hongqi and Su, Jinhai},
  title   = {Quantum Key Distribution Network: Optimal Secret-Key-Aware Routing
             Method for Trust Relaying},
  journal = {China Communications},
  volume  = {15},
  number  = {2},
  pages   = {33--45},
  year    = {2018},
  doi     = {10.1109/CC.2018.8300270}
}

@article{yu2022skpcr,
  author  = {Yu, Xiaosong and Liu, Yuhang and Zou, Xiangyu and Cao, Yuan and
             Zhao, Yongli and Nag, Avishek and Zhang, Jie},
  title   = {Secret-Key Provisioning with Collaborative Routing in
             Partially-Trusted-Relay-based Quantum-Key-Distribution-Secured
             Optical Networks},
  journal = {Journal of Lightwave Technology},
  volume  = {40},
  number  = {12},
  pages   = {3530--3542},
  year    = {2022},
  doi     = {10.1109/JLT.2022.3153992}
}

@inproceedings{amer2020routing,
  author    = {Amer, Omar and Krawec, Walter O. and Wang, Bing},
  title     = {Efficient Routing for Quantum Key Distribution Networks},
  booktitle = {2020 IEEE International Conference on Quantum Computing and
               Engineering (QCE)},
  pages     = {137--147},
  year      = {2020},
  doi       = {10.1109/QCE49297.2020.00027}
}

@article{zhang2023icqkd,
  author  = {Zhang, Qiaolun and Ayoub, Omran and Wu, Jun and Lin, Xi and
             Tornatore, Massimo},
  title   = {{IC-QKD}: An Information-Centric Quantum Key Distribution Network},
  journal = {IEEE Communications Magazine},
  volume  = {61},
  number  = {12},
  pages   = {148--154},
  year    = {2023},
  doi     = {10.1109/MCOM.001.2200520}
}

@misc{grillo2021,
  author       = {Grillo, Milo and Dowhuszko, Alexis A. and
                  Khalighi, Mohammad-Ali and H{\"a}m{\"a}l{\"a}inen, Jyri},
  title        = {Resource Allocation in a Quantum Key Distribution Network with
                  {LEO} and {GEO} Trusted-Repeaters},
  howpublished = {arXiv:2107.07595},
  year         = {2021},
  note         = {\url{https://arxiv.org/abs/2107.07595}}
}

@article{zeng2024hybrid,
  author  = {Zeng, Pei and Zhou, Debayan and Lu, Weijie and Sinclair, Nathan and
             others},
  title   = {Practical Hybrid {PQC-QKD} Protocols with Enhanced Security and
             Performance},
  journal = {arXiv:2411.01086},
  year    = {2024},
  note    = {\url{https://arxiv.org/abs/2411.01086}}
}

@inproceedings{braun2023hybrid,
  author    = {Braun, Ralf-Peter and Geitz, Marc and Doll, Florian and others},
  title     = {Hybrid {QKD} \& {PQC} Protocols Implemented in the {Berlin}
               {OpenQKD} Testbed},
  booktitle = {2023 8th International Conference on Frontiers of Signal Processing
               (ICFSP)},
  year      = {2023},
  doi       = {10.1109/ICFSP59665.2023.10372894}
}

@article{shahrokhi1990,
  author  = {Shahrokhi, Farhad and Matula, David W.},
  title   = {{The Maximum Concurrent Flow Problem}},
  journal = {Journal of the ACM},
  volume  = {37},
  number  = {2},
  pages   = {318--334},
  year    = {1990},
  doi     = {10.1145/77600.77620},
  url     = {https://doi.org/10.1145/77600.77620}
}

@book{ahuja1993,
  author    = {Ahuja, Ravindra K. and Magnanti, Thomas L. and Orlin, James B.},
  title     = {{Network Flows: Theory, Algorithms, and Applications}},
  publisher = {Prentice Hall},
  address   = {Englewood Cliffs, NJ},
  year      = {1993},
  isbn      = {978-0136175490}
}

@article{huangfu2018highs,
  author  = {Huangfu, Qi and Hall, J. A. J.},
  title   = {{Parallelizing the Dual Revised Simplex Method}},
  journal = {Mathematical Programming Computation},
  volume  = {10},
  number  = {1},
  pages   = {119--142},
  year    = {2018},
  doi     = {10.1007/s12532-017-0130-5},
  url     = {https://doi.org/10.1007/s12532-017-0130-5}
}

\vfill

\end{document}